\newtheorem{theorem}{Theorem}[section]
\newtheorem{lemma}{Lemma}[section]
\newtheorem{remark}[theorem]{Remark}
\newcommand{\inn}{\mathrm{in}}
\numberwithin{equation}{section}
\newtheorem{assumption}{Assumption}
\crefname{equation}{}{}
\crefname{algorithm}{Algorithm}{}
\crefname{theorem}{Theorem}{}
\crefname{remark}{Remark}{}
\crefname{figure}{Figure}{}
\begin{document}

\title{Efficient Frozen Gaussian Sampling Algorithms for Nonadiabatic Quantum Dynamics at Metal Surfaces}

\author[1]{Zhen Huang}
\ead{hertz@berkeley.edu}

\author[2]{Limin Xu}
\ead{xlm@mails.tsinghua.edu.cn}

\author[3]{Zhennan Zhou\corref{cor1}}
\ead{zhennan@bicmr.pku.edu.cn}

\cortext[cor1]{Corresponding author}

\address[1]{Department of Mathematics, University of California, Berkeley, California 94720 USA}
\address[2]{Department of Mathematical Sciences, Tsinghua University, Beijing 100084, China}
\address[3]{Beijing International Center for Mathematical Research, Peking University, Beijing, 100871, China}

\begin{abstract}
In this article, we propose a Frozen Gaussian Sampling (FGS) algorithm for simulating nonadiabatic quantum dynamics at metal surfaces with a continuous spectrum. This method consists of a Monte-Carlo algorithm for sampling the initial wave packets on the phase space and a surface-hopping type stochastic time propagation scheme for the wave packets. We prove that to reach a certain accuracy threshold, the sample size required is independent of both the semiclassical parameter $\varepsilon$ and the number of metal orbitals $N$, which makes it  one of the most promising methods to study the  nonadiabatic dynamics. The algorithm and its convergence properties are also validated numerically. Furthermore, we carry out numerical experiments including exploring the nuclei dynamics,  electron transfer and  finite-temperature effects, and demonstrate that our method captures the physics which can not be captured by classical surface hopping trajectories.
\end{abstract}
\begin{keyword} metal surfaces \sep Frozen Gaussian Sampling \sep nonadiabatic quantum dynamics \sep Semiclassical Schr\"{o}dinger equation system
\end{keyword}
\maketitle

\section{Introduction}
\label{sec:intro}
Quantum dynamics of atomic systems are well known to exhibit multiscale structures, where  the nuclei are much heavier than the electrons and therefore move much slower. The widely-used Born-Oppenheimer (BO) approximation \cite{born1927quantentheorie} is proposed based on such a  scale-separation structure, where one regards the nuclei as classical particles, and  ignore the couplings between different electronic states. The latter treatment is also termed the adiabatic assumption \cite{born1955dynamical}.

On the one hand, the BO approximation of the Schr\"{o}dinger equation, unfortunately, breaks down dramatically in many scenarios since the adiabatic assumptions are often violated \cite{bunker1977breakdown, pisana2007breakdown, rahinov2011quantifying}. On the other hand, the Schr\"{o}dinger equation, as a high-dimensional model for quantum dynamics, can not be solved in a brute-force way because of the curse of dimensionality. Therefore, how to efficiently calculate dynamics in the non-adiabatic regime is a central topic in  chemistry and material science. 
In order to tackle these issues, various methods based on mixed quantum-classical dynamics \cite{tully1998mixed,kapral1999mixed,kapral2006progress,abedi2014mixed,crespo2018recent} have been developed to simulate non-adiabatic quantum dynamics.

Molecule-metal interfaces are of particular interest since it is related to many different phenomenon in experimental chemistry, such as  chemisorption \cite{newns1969self}, electrochemistry \cite{persson1993applications}, heterogeneous catalysis \cite{luo2016electron} and molecular junctions \cite{nitzan2003electron}. The breakdown of BO approximations at metal surfaces leads to theoretical study of many interesting physical phenomenon: such as  electronic friction \cite{head1995molecular,dou2018perspective}, electron transfer \cite{lindstrom2006photoinduced} and energy transfer \cite{whitmore1982mechanisms}. 
The well-known  Newns-Anderson model \cite{newns1969self}, or the  Anderson-Holstein model \cite{holstein1959studies}, is widely used to model nonadiabatic quantum dynamics at metal surfaces \cite{dou2018perspective,dou2015surface}. In the first quantization, this model could be written as the following Schr\"{o}dinger equation system:
\begin{equation}
\begin{array}{c}{\left\{
\begin{aligned}
\mathrm{i}\varepsilon \partial_{t} u_{0}(t, x)&=-\frac{\varepsilon^{2}}{2} \Delta u_{0}(t, x)+U_{0}(x) u_{0}(t, x)+\varepsilon \sum_{k=1}^Nh V\left({\mathcal{E}_k}, x  \right)u_{k}(t, x),\\
\mathrm{i}\varepsilon \partial_{t} u_{k}(t, x)&=-\frac{\varepsilon^{2}}{2} \Delta u_{k}(t, x)+\left(U_{1}(x)+{\mathcal{E}}_k\right) u_{k}(t, x)+\varepsilon \overline{V}({\mathcal{E}_k}, x) u_{0}(t, x).
\end{aligned}\right.}\\
x\in\mathbb{R}^m, \quad t\in\mathbb{R},\quad k=1,\cdots,N.
\end{array}
\label{eq:discretize_u}
\end{equation}
with initial value
\begin{equation}
\left\{
\begin{aligned}
    u_0(0,x)&=u_0^{\inn}(x),\\ u_k\left(0,x\right)&=0,\qquad k=1,\cdots,N.
    \end{aligned}\right.
    \label{eq:theinitial}
\end{equation}

We will discuss the derivation and details of \cref{eq:discretize_u} in \cref{sec:model}. For now, let us emphasize its major difficulties as of numerical simulations. On the one hand,
there is a non-dimensional parameter $\varepsilon$ presumed to be very small, therefore this model is exactly in the regime of semiclassical dynamics \cite{jin2011mathematical,lasser2020computing}. The solution is highly oscillatory both in space and time \cite{jin2011mathematical}.  On the other hand, we need to solve an $(N+1)\times (N+1)$ matrix Schr\"{o}dinger equation system, where $N$ comes from discretizing the continuum band of metal, and $N$ needs to be large enough in order to capture the physics of continuum band emerging from condensed phases of matter. 
We need a reasonable algorithm whose cost doesn't scale dramatically as $\varepsilon\rightarrow 0$ and $N\rightarrow \infty$.
Methods based on direct discretization of spatial freedom, such as time-splitting spectral methods \cite{bao2002time}, is not ideal since its cost scales  (at least) $O(1/\varepsilon^{1+m})$ ($m$ is the dimension of spatial coordinates) as $\varepsilon\rightarrow 0$ and (at least) $O(N)$ as $N\rightarrow\infty$.

In scientific literature \cite{dou2015surface}, there are generally two kinds of numerical schemes to calculate nonadiabatic quantum dynamics at metal surfaces: Erhenfest dynamics \cite{moss2009ehrenfest} (possibly with electronic friction \cite{head1995molecular})  and surface hopping methods \cite{shenvi2009nonadiabatic}.
With or without electronic friction, Erhenfest dynamics  are in fact a 
 mean-field  approximation of  the coupled electron–nuclear system, therefore  is not valid when strong nonadiabatic dynamics occur \cite{bartels2011energy}.  Independent Electron Surface Hopping (IESH) \cite{shenvi2009nonadiabatic}, as the surface hopping method specifically designed for metal surfaces, is proposed within the same spirit of Tully's original surface hopping method \cite{tully1990molecular}. There are many variants of surface hopping methods in literature, most of which lack a rigorous mathematical foundation and numerical analysis. In \cite{lu2018frozen}, the authors provide a Frozen Gaussian Approximation with surface hopping  (FGA-SH) method along with a rigorous mathematical justification for $O(\varepsilon)$ accuracy. However, its generalization to simulating metal surfaces demands  proper treatment of a large number of bands with a particular  coupling structure,   thus such an investigation poses non-trivial challenges in terms of algorithm design and numerical analysis.
 
In this article, we propose a Frozen Gaussian Sampling (FGS) algorithm for simulating nonadiabatic quantum dynamics at metal surfaces.  Based on a stochastic approximation of the Frozen Gaussian representation of the Schr\"{o}dinger equation system,  the wave packets to be sampled incorporate randomness from both the initial state and from the stochastic evolution. We rigorously prove and numerically validate that the sample size required by the FGS algorithm is independent from both the number of metal orbitals $N$ and the semiclassical parameter $\varepsilon$.
From this sense, FGS as an asymptotic method to the semiclassical Schr\"{o}dinger equation system is superior to the direct application of  other existing asymptotic schemes, such as the WKB methods \cite{engquist2003computational},  Landau-Zener transition asymptotic methods \cite{hagedorn1998landau,jin2011eulerian} and various wave packet based asymptotic methods \cite{jin2008gaussian,lu2010frozen,zhou2018gaussian,miao2021novel}. A by-product is  we are able to prove that the computational cost of  FGS algorithm in finite-band systems \cite{lu2018frozen} is also essentially independent of the semiclassical parameter $\varepsilon$.
What's more, using numerical experiments, we demonstrate  that the FGS method can capture information of observables that are intrinsically quantum, in the sense that such information can not be captured by classical surface hopping trajectories.

The article is organized as followed. In \cref{sec:model}, we give an introduction of the Anderson-Holstein model (Newns-Anderson model) and its nondimensionalization, along with its discretization on the continuous spectrum. In  \cref{sec:algorithm}, we present our algorithm for simulating nonadiabatic dynamics at metal surfaces. After reviewing Frozen Gaussian representation for one-level system in \cref{subsec:FGA}, we derive the  integral representation of the approximate solution based on the surface hopping ansatz in \cref{subsec:FGA_metal}. Based on its stochastic interpretation in \cref{subsec:Stochasticinterpretation}, we present the  Frozen Gaussian Sampling algorithm in \cref{subsec:algorithm}. In \cref{sec:analysis}, we carry out the numerical analysis of our method, indicating that the cost of this stochastic method is independent of both $\varepsilon$ and $N$ (See \cref{thm:main}). Finally, we provide extensive numerical experiments to  illustrate the numerical advantages of our method and some explorations of this model in \cref{sec:numerical}. 

\section{Quantum dynamics of nuclei at metal surfaces}
\label{sec:model}
The quantum dynamics of nuclei at metal surfaces are described by the Newns-Anderson model \cite{newns1969self}, also known as the Anderson-Holstein model \cite{holstein1959studies}, which is a special kind of Anderson impurity model: the electronic ground state of the molecule is treated as the system orbital, while the metal electronic orbitals are treated as bath orbitals. The energies of metal electronic orbitals form a continuous spectrum $[\mathcal{E}_a,\mathcal{E}_b]$.
The model Hamiltonian is usually written in the following second quantization form:
\begin{equation}
\begin{aligned}
    \hat H &= \frac{\hat p^2}{2m_{\text{n}}}+ U_1(\hat x) + h(\hat x)\hat{d}^{\dagger}\hat{d} \\
   & +\int_{\mathcal{E}_a}^{\mathcal{E}_b}(\mathcal{E}-\mu) \hat c_{\mathcal{E}}^{\dagger}\hat c_{\mathcal{E}}\mathrm{d}\mathcal{E}
    +\int_{\mathcal{E}_a}^{\mathcal{E}_b}\left(V(\mathcal{E},\hat{x}) \hat c_{\mathcal{E}}^{\dagger}\hat d+\overline{V}(\mathcal{E},\hat{x})\hat d^{\dagger}\hat c_{\mathcal{E}}\right)\mathrm{d}\mathcal{E}.
\end{aligned}
\end{equation}
Here $\hat p$ is the momentum operator, $\hat x$ is the position operator, $m_{\text{n}}$ is the mass of the nuclei, $U_1(\hat{x})$ is the
nuclear potential for the neutral molecule, $U_1(\hat{x})+h(\hat x)$ is the
nuclear potential for the charged molecule, $\mu$ is the chemical potential. 
And,  $\hat{d}$ and $\hat{d}^{\dagger}$ are the annihilation and creation operators for the electronic ground state of the molecule, $\hat{c}_{\mathcal E}$ and $\hat{c}_{\mathcal E}^{\dagger}$ are the annihilation and creation operators for metal electronic orbitals with energy level $\mathcal E\in[\mathcal E_a,\mathcal E_b]$, $V\left(\mathcal{E},x\right)$ describes the coupling between the molecule and metal orbitals, and  $\overline{V}\left(\mathcal{E},x\right)$ means the complex conjugate of ${V}\left(\mathcal{E},x\right)$.

Now let us demonstrate the non-dimensionalization of this model.  
Let $\ell$ be the characteristic length scale and $E$ be the characteristic energy scale, we  introduce the dimensionless parameter $\varepsilon$, which is referred to as the semiclassical parameter:
\begin{equation}
    \varepsilon=\frac{\hbar}{\ell\sqrt{mE}}.
\end{equation}
Let $E_V$ be the characteristic  scale of molecule-metal coupling energy, we are ready to perform the following rescaling (similar as in \cite{cao2017lindblad}):
\begin{equation}
   \begin{array}{c}
    x=\ell \tilde{x},\quad t=\sqrt{\frac{m\ell^2}{E}}\tilde{t},\quad 
    \mathcal{E}=E\tilde{\mathcal{E}},\quad 
    \mu=E\tilde{\mu}, \\
    \tilde U_1(\tilde x) =\frac{U_1(x)}{E},\quad
     \tilde h(\tilde x) =\frac{h(x)}{E},\quad,\tilde V(\tilde{\mathcal{E}},
    \tilde x)=\frac{V(\mathcal{E},x)}{E_V},\quad \hat H=E{{\hat H_{\text{non}}}}.
   \end{array}
\end{equation}
And particularly,  we  choose the   regime
$     E_V=\varepsilon E
$, and the corresponding non-dimensionalized Hamiltonian is
\begin{equation}
\begin{aligned}
    \hat H_{\text{non}} &=-\frac{\varepsilon^{2}}{2}  \nabla_{\tilde{x}}^{2}+ \tilde U_1(\tilde x) + \tilde h(\tilde x)\hat{d}^{\dagger}\hat{d} \\
   & +\int_{\tilde{\mathcal{E}}_a}^{\tilde{\mathcal{E}}_b}(\tilde{\mathcal{E}}-\tilde\mu) \hat c_{\tilde{\mathcal{E}}}^{\dagger}\hat c_{\tilde{\mathcal{E}}}\mathrm{d}\mathcal{E}
    +\varepsilon\int_{\tilde{\mathcal{E}}_a}^{\tilde{\mathcal{E}}_b}\left(\tilde V(\tilde{\mathcal{E}},\tilde{x}) \hat c_{\mathcal{E}}^{\dagger}\hat d+\overline{V}(\mathcal{E},\hat{x})\hat d^{\dagger}\hat c_{\mathcal{E}}\right)\mathrm{d}\mathcal{E}.
\end{aligned}
\end{equation}    

We can see that the coupling between the molecule and metal electronic orbitals is chosen to be $O(\varepsilon)$. By the scientific literature \cite{lindstrom2006photoinduced,dou2018perspective}, this scenario can be intepreted as the  weak-coupling regime of nonadiabatic dynamics at metal surfaces. However, even though the coupling is considered to be weak, since the Hamiltonian has a multiscale structure, such an $O(\varepsilon)$ coupling  still causes an $O(1)$ change in the density of states. 
As a matter of fact, in the adiabatic representation, the coupling between different levels is also formally  $O(\varepsilon)$, see \cite{lu2018frozen}.

 In this article,  our starting point is to rewrite $\hat H_{\operatorname{non}}$ in the first quantization.
Let  $x\in\mathbb{R}^m$ represents the nuclei degrees of freedom, and let $U_0(x)=U_1(x)+h(x)$ denote the nuclear potential for the charged molecule, and recall that $U_1(x)$ represents the nuclear potential for the neutral molecule. The above model is described by the following Schr\"{o}dinger equation system:

\begin{equation}\left\{
\begin{aligned}
\mathrm{i}\varepsilon \partial_{t} \psi_{0}(t, x)&=-\frac{\varepsilon^{2}}{2} \Delta \psi_{0}(t, x)+U_{0}(x) \psi_{0}(t, x)+\varepsilon \int_{\mathcal{E}_a}^{{\mathcal{E}_b}} \mathrm d \tilde{\mathcal{E}} V(\tilde{\mathcal{E}}, x) \psi_{1}(t,  \tilde{\mathcal{E}},x), \\
\mathrm{i}\varepsilon \partial_{t} \psi_{1}(t, \mathcal{E},x)&=-\frac{\varepsilon^{2}}{2} \Delta \psi_{1}(t, \mathcal{E},x)+\left(U_{1}(x)+{\mathcal{E}}\right) \psi_{1}(t, \mathcal{E},x)+\varepsilon \overline{V}\left({\mathcal{E}}, x\right) \psi_{0}(t, x).
\end{aligned}\right.
 \label{eq:continuous_u}
\end{equation}
We have dropped  the “tilde”  and without loss of generality we assume $\mu=0$ for simplification.
Here $\psi_0(t,x)\in L^2(\mathbb{R}^m)$ represents the nuclei wave function when the independent electron propagates into the molecule and therefore the molecule is charged, while $\psi_1\left(t,\mathcal{E},x\right)\in L^2\left([\mathcal{E}_a,\mathcal{E}_b]\times\mathbb{R}^m\right)$ represents the nuclei wave function when the molecule is neutral and the electron is in the metal orbital with energy $\mathcal E$.  $\psi_0(t,x)$ and $\psi_1\left(t,\mathcal{E},x\right)$ satisfy the following normalization condition (also known as, mass conservation):
\begin{equation}
    1=m(t)=\int_{\mathbb{R}^m}\left(|\psi_0(t,x)|^2\mathrm{d}x+ \int_{\mathcal{E}_a}^{\mathcal{E}_b}|\psi_1\left(t,\mathcal{E},x\right)|^2\mathrm{d}\mathcal{E}\right)\mathrm{d}x,\quad \forall t\geq 0.
\end{equation}

We remark that in this work we are considering a {\emph{closed}} quantum system. This is exactly the viewpoint of Newn's pioneering work \cite{newns1969self}. In some scientific literature \cite{jin2021nonadiabatic,dou2015surface}, the setup of metal surfaces are open quantum systems, where one focus only on the molecule orbital (the orbital corresponding to $\psi_0$) and treat the metal orbitals (the orbitals corresponding to $\psi_1$) as heat baths with Boltzmann distribution.   We would like to emphasize  that even in the closed-system treatment, the nonadiabatic features of quantum dynamics   have already arisen. In other words, one do not need the open quantum system and canonical ensemble assumption to encounter nonadiabatic quantum dynamics at metal surfaces.  We  leave the open quantum system treatment of metal surfaces to future work.

The energy $E(t)$ of the above system \cref{eq:continuous_u} is defined as:
\begin{equation}
\begin{aligned}
E(t)&=\int_{\mathbb{R}^m}\left(\frac{\varepsilon^2}{2}\left|\nabla_x\psi_0(t,x)\right|^2+U_0(x)\left|\psi_0(t,x)\right|^2\right)\mathrm{d}x\\
&+\int_{\mathbb{R}^m}\int_{\mathcal{E}_a}^{\mathcal{E}_b}\left(
\frac{\varepsilon^2}{2}\left|\nabla_x\psi_1\left(t,\mathcal{E},x\right)\right|^2
+\left(U_1(x)+\mathcal{E}\right)\left|\psi_1\left(t,\mathcal{E},x\right)\right|^2\right)\mathrm{d}\mathcal{E}\mathrm{d}x\\
&+\varepsilon\int_{\mathbb{R}^m}\int_{\mathcal{E}_a}^{\mathcal{E}_b}\left(
\bar\psi_0(t,x) V\left(\mathcal{E},x\right) \psi_{1}\left(t,\mathcal{E},x\right) +\psi_0(t,x) \overline{V}\left(\mathcal{E},x\right) \bar\psi_{1}\left(t,\mathcal{E},x\right)
\right)\mathrm{d}\mathcal{E}\mathrm{d}x.
\end{aligned}
\end{equation}
It's standard to verify that \cref{eq:continuous_u} satisfies the energy conservation: $E(t)=E(0)$ for any $t\geq 0$.

For simplicity, we assume at the initial state the molecule is charged, and thus the electron is in the adsorbate orbital. In other words, we consider the following initial condition:
\begin{equation}
    \psi_0(0,x)=\psi_0^{\inn}(x),\quad \psi_1\left(0,\mathcal{E},x\right)=0.
\end{equation}
However, since \cref{eq:continuous_u} is a linear equation system, the results in this article are applicable to any initial conditions.

A typical treatment of \cref{eq:continuous_u}  is to discretize the electronic continuum spectrum $[{\mathcal{E}_a},{\mathcal{E}_b}]$ into $N$ separate orbitals.  Then we obtain the discrete Anderson-Holstein model, which is a commonly used model
to investigate surface hopping at molecule-metal surfaces in scientific literature (for example, see \cite{shenvi2009nonadiabatic,jin2021nonadiabatic,dou2015surface}).
For the sake of simplicity, we use the following equidistant discretization throughout this article:
\begin{equation}
    \mathcal{E}_k=\mathcal{E}_a+(i-\frac{1}{2})h,\quad h=\frac{1}{N}(\mathcal{E}_b-\mathcal{E}_a),\quad k=1,\cdots,N.
\end{equation}
Note that the generalization to any non-equidistant discretization is trivial. Let $$u_0(t,x)=\psi_0(t,x), \quad u_k(t,x)=\psi_1(t,\mathcal{E}_k,x),\quad k=1,\cdots,N.$$
Then we can write down the Anderson-Holstein model for $\{u_k(t,x)\}_{k=0}^N$, which we have already presented in \cref{sec:intro}:
\begin{equation*}
\begin{array}{c}{\left\{
\begin{aligned}
\mathrm{i}\varepsilon \partial_{t} u_{0}(t, x)&=-\frac{\varepsilon^{2}}{2} \Delta u_{0}(t, x)+U_{0}(x) u_{0}(t, x)+\varepsilon \sum_{k=1}^Nh V\left({\mathcal{E}_k}, x  \right)u_{k}(t, x),\\
\mathrm{i}\varepsilon \partial_{t} u_{k}(t, x)&=-\frac{\varepsilon^{2}}{2} \Delta u_{k}(t, x)+\left(U_{1}(x)+{\mathcal{E}}_k\right) u_{k}(t, x)+\varepsilon \overline{V}({\mathcal{E}_k}, x) u_{0}(t, x).
\end{aligned}\right.}\\
x\in\mathbb{R}^m, \quad t\in\mathbb{R},\quad k=1,\cdots,N.
\end{array}
\end{equation*}
with initial value
\begin{equation*}
    u_0(0,x)=u_0^{\inn}(x),\quad u_k\left(0,x\right)=0,\quad k=1,\cdots,N.
\end{equation*}
Then for the systems of $\{u_k(t,x)\}_{k=0}^N$, the total mass is defined as
\begin{equation}
    m_N(t)=\int_{\mathbb{R}^m}\left(|u_0(t,x)|^2\mathrm{d}x+ h\sum_{k=1}^N|u_k(t,x)|^2\right)\mathrm{d}x.
\end{equation}
Here the index $N$ in $m_N(t)$ represents that the metal continuum in this model is discretized into $N$ orbitals. Similarly, we can define the system energy $E_N(t)$:
\begin{equation}
\begin{aligned}
E_N(t)&=\int_{\mathbb{R}^m}\frac{\varepsilon^2}{2}\left|\nabla_x u_0\right|^2+U_0(x)\left|u_0\right|^2+ h\sum_{k=1}^N\left(\frac{\varepsilon^2}{2}|\nabla_x u_k|^2+\left(U_1(x)+\mathcal{E}\right)\left|u_k\right|^2\right) \mathrm{d}x  \\
&+\varepsilon\int_{\mathbb{R}^m}h\sum_{k=1}^N\left(V(\mathcal{E}_k,x)\bar u_0(t,x)  u_{k}(t,x) +\overline{V}(\mathcal{E}_k,x)u_0(t,x) \bar u_{k}(t,x)
\right)\mathrm{d}x.
\end{aligned}
\end{equation}
It's also straightforward to verify that solutions to \cref{eq:discretize_u} satisfy the conservation of mass and energy:
\begin{equation}
    m_N(t)=m_N(0),\quad E_N(t)=E_N(0),\quad \forall t\geq 0. \label{eq:discretizeconsevation}
\end{equation}

  Dynamics \cref{eq:discretize_u}
could be regarded as a  $(N+1)-$level semiclassical Schr\"{o}dinger equation system.
As we have mentioned in \cref{sec:intro}, 
our goal is to design an algorithm for \cref{eq:discretize_u},
whose computational cost does not essentially depend on both the semiclassical parameter $\varepsilon$ and the number of metal orbitals $N$.

\section{Algorithm}
\label{sec:algorithm}
In this section, we introduce the Frozen Gaussian Sampling (FGS) algorithm for the system \cref{eq:discretize_u}.
\subsection{Review of Frozen Gaussian  representation for one-level system}
\label{subsec:FGA}
Before dealing with the multi-level system \cref{eq:discretize_u}, we first give a brief review of the Frozen Gaussian  representation for a one-level system.
Consider the following semiclassical Schr\"{o}dinger equation:
\begin{equation}
i \varepsilon \frac{\partial}{\partial t} u(t, x)=-\frac{\varepsilon^{2}}{2} \Delta u(t, x)+U(x) u(t, x),\quad u(0,x)=u_{\mathrm{in}}(x).
\label{eq:onelevelSchrodinger}
\end{equation}
It is known that the Frozen Gaussian approximation of \cref{eq:onelevelSchrodinger} is an approximation derived from asymptotic analysis  with $O(\varepsilon)$ error \cite{swart2009mathematical}. It is based on an integral representation of Frozen Gaussian wave packets on phase space:
\begin{equation}
u_{\mathrm{FG}}(t, x)=\frac{1}{(2 \pi \varepsilon)^{\frac{3 m}{2}}} \int_{\mathbb{R}^{2 m}} A(t, q, p) \mathrm{e}^{\frac{\mathrm{i} \Theta(t, x, q, p)}{\varepsilon}} \mathrm{d} q \mathrm{~d} p.
\label{eq:onelevelFGA}
\end{equation}
where the phase function $\Theta$ is
\begin{equation}
\Theta(t, x, q, p)=S(t, q, p)+P(t, q, p) \cdot(x-Q(t, q, p))+\frac{\mathrm{i}}{2}|x-Q(t, q, p)|^{2}.
\end{equation}
The evolution of $(Q(t,q,p),P(t,q,p))$ is governed by the Hamiltonian flow with $(q,p)$ as its initial value:
\begin{equation}
\begin{aligned}
\frac{\mathrm{d}}{\mathrm{d} t} Q&=\partial_{p} h(Q, P),\quad Q(0,q,p)=q. \\
\frac{\mathrm{d}}{\mathrm{d} t} P&=-\partial_{q} h(Q, P),\quad P(0,q,p)=p.
\end{aligned}
\label{eq:Hamiltoniandynamics}
\end{equation}
where $h(q, p)=\frac{1}{2}|p|^{2}+U(q)$ is the classical Hamiltonian.
The dynamics of the action $S(t,q,p)$ and amplitude $A(t,q,p)$ are
\begin{equation}
\frac{\mathrm{d}}{\mathrm{d} t} S(t, q, p)=\frac{1}{2}|P|^{2}-U(Q),\quad S(0, q, p)=0.
\end{equation}
\begin{equation}
    \begin{aligned}
    \frac{\mathrm{d}}{\mathrm{d} t} A(t, q, p)&=\frac{A}{2} \operatorname{tr}\left(Z^{-1}\left(\partial_{z} P-\mathrm{i} \partial_{z} Q \nabla_{Q}^{2} U(Q)\right)\right),\\
    A(0, q, p)&=2^{\frac{m}{2}} \int_{\mathbb{R}^{m}} u_{\mathrm{in}}(y) e^{\frac{\mathrm i}{\varepsilon}\left(-p \cdot(y-q)+\frac{i}{2}|y-q|^{2}\right)} \mathrm{d} y.
    \end{aligned}\label{eq:dynamicsforamplitude}
\end{equation}
Here we use the short hand notation:
\begin{equation}
\partial_{z}=\partial_{q}-i \partial_{p} \quad \text { and } \quad Z=\partial_{z}(Q+i P).
\end{equation}
We emphasize here that  the initial condition $u_{\text{in}}$ has already been encoded into $A(0,q,p)$.

The above Frozen Gaussian dynamics are derived using asymptotic matching \cite{swart2009mathematical,lu2010frozen}. The Frozen Gaussian representation for the initial value $u_{\inn}$ has the following property, which we quote directly from \cite{swart2009mathematical}:
\begin{theorem}
For $u \in L^{2}\left(\mathbb{R}^{m}\right)$, we have the following decomposition
\begin{equation}
u_{\mathrm{in}}(x)=\frac{1}{(2 \pi \varepsilon)^{\frac{3 m}{2}}} \int_{\mathbb{R}^{2 m}} A(0, q, p) \mathrm{e}^{\frac{\mathrm{i} \Theta(0, x, q, p)}{\varepsilon}} \mathrm{d} q \mathrm{~d} p,
\end{equation}
where $A(0,q,p)$ is given as in \cref{eq:dynamicsforamplitude}.
\end{theorem}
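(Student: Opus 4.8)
The plan is to recognize the asserted identity as the continuous resolution of the identity (tight-frame completeness) for the semiclassical coherent states, and to prove that completeness by a Gaussian regularization of the momentum integral. First I would use that $S(0,q,p)=0$, $Q(0,q,p)=q$ and $P(0,q,p)=p$, so that $\Theta(0,x,q,p)=p\cdot(x-q)+\tfrac{\ii}{2}|x-q|^2$. Introducing the $L^2$-normalized coherent state $g_{q,p}(x)=(\pi\varepsilon)^{-m/4}\ee^{\frac{\ii}{\varepsilon}p\cdot(x-q)}\ee^{-\frac{1}{2\varepsilon}|x-q|^2}$, a one-line check gives $\ee^{\ii\Theta(0,x,q,p)/\varepsilon}=(\pi\varepsilon)^{m/4}g_{q,p}(x)$ and, from \cref{eq:dynamicsforamplitude}, $A(0,q,p)=2^{m/2}(\pi\varepsilon)^{m/4}\langle u_{\inn},g_{q,p}\rangle$. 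Substituting both into the right-hand side and collecting the powers of $2$, $\pi$ and $\varepsilon$, the prefactor $1/(2\pi\varepsilon)^{3m/2}$ combines to $1/(2\pi\varepsilon)^{m}$, so the claim becomes equivalent to the completeness relation
\begin{equation*}
\frac{1}{(2\pi\varepsilon)^{m}}\int_{\mathbb{R}^{2m}}\langle u,g_{q,p}\rangle\, g_{q,p}(x)\,\mathrm dq\,\mathrm dp = u(x),\qquad u\in L^2(\mathbb{R}^m).
\end{equation*}

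To establish this I would first reduce to $u\in\mathcal S(\mathbb{R}^m)$: both the analysis map $u\mapsto(\langle u,g_{q,p}\rangle)_{(q,p)}$ and the synthesis map $F\mapsto\int F(q,p)g_{q,p}\,\mathrm dq\,\mathrm dp$ are bounded between $L^2(\mathbb{R}^m)$ and $L^2(\mathbb{R}^{2m})$ — the analysis map is, up to the factor $(2\pi\varepsilon)^{m/2}$, an isometry (the standard FBI/wave-packet Plancherel identity), and the synthesis map is its adjoint — so once the identity is proved on the dense subspace $\mathcal S$ it holds on all of $L^2$. For $u\in\mathcal S$ I would insert a factor $\ee^{-\delta|p|^2}$ to render the triple integral in $(y,q,p)$ absolutely convergent, apply Fubini, perform the Gaussian $q$-integral using $|y-q|^2+|x-q|^2=2|q-\tfrac{x+y}{2}|^2+\tfrac12|x-y|^2$, and then the Gaussian $p$-integral; after the constants cancel exactly one is left with $I_\delta(x)=\int_{\mathbb{R}^m}u(y)\,\ee^{-\frac{|x-y|^2}{4\varepsilon}}\,(4\pi\delta\varepsilon^2)^{-m/2}\ee^{-\frac{|x-y|^2}{4\delta\varepsilon^2}}\,\mathrm dy$. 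The last kernel is a Gaussian approximate identity, so $I_\delta(x)\to u(x)$ as $\delta\to0$; on the other hand, for Schwartz $u$ the bounds $|A(0,q,p)|\lesssim\int|u(y)|\ee^{-|y-q|^2/2\varepsilon}\,\mathrm dy$ and $|\ee^{\ii\Theta(0,x,q,p)/\varepsilon}|=\ee^{-|x-q|^2/2\varepsilon}$ make the unregularized integral absolutely convergent, so dominated convergence ($\ee^{-\delta|p|^2}\le1$) identifies $\lim_{\delta\to0}I_\delta$ with the right-hand side of the theorem, closing the Schwartz case.

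The principal obstacle is exactly the non-absolute convergence of the momentum integral in the unregularized formula — morally the distributional identity $\int_{\mathbb{R}^m}\ee^{\ii p\cdot(x-y)/\varepsilon}\,\mathrm dp=(2\pi\varepsilon)^m\delta(x-y)$ — which is why a naive application of Fubini is illegitimate and why the statement is naturally an $L^2$ result, proved through the regularize-then-pass-to-the-limit scheme above; the remaining, more routine ingredient is the $L^2$-boundedness (indeed isometry up to a constant) of the coherent-state transform, which may be taken from FBI Plancherel theory or obtained from the same Gaussian computation. Alternatively, since this is a classical fact, one may simply cite it from \cite{swart2009mathematical}, as is done in the statement.
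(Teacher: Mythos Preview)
The paper does not prove this theorem at all; it is quoted verbatim from \cite{swart2009mathematical}. Your proposal therefore supplies strictly more than the paper does, and your reduction to the coherent-state resolution of the identity together with the Gaussian-regularization argument is the standard and correct route; the constant bookkeeping (collapsing $(2\pi\varepsilon)^{-3m/2}\cdot 2^{m/2}(\pi\varepsilon)^{m/2}$ to $(2\pi\varepsilon)^{-m}$) and the evaluation of $I_\delta$ are both right.

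There is one small gap worth patching. In the last step you assert that the bounds $|A(0,q,p)|\lesssim\int|u(y)|\ee^{-|y-q|^2/2\varepsilon}\,\mathrm dy$ and $|\ee^{\ii\Theta(0,x,q,p)/\varepsilon}|=\ee^{-|x-q|^2/2\varepsilon}$ make the unregularized $(q,p)$-integral absolutely convergent. Neither bound carries any decay in $p$, so as written they do not justify absolute convergence (nor, therefore, dominated convergence as $\delta\to0$). The fix is easy: for $u\in\mathcal S$ one integrates by parts in $y$ against $\ee^{-\ii p\cdot(y-q)/\varepsilon}$ to obtain $|A(0,q,p)|\le C_N(1+|p|)^{-N}\ee^{-c|q|^2}$ for every $N$ (equivalently, the FBI transform maps $\mathcal S(\mathbb{R}^m)$ into $\mathcal S(\mathbb{R}^{2m})$). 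With that estimate the integrand is $L^1(\mathbb{R}^{2m})$ and your dominated-convergence argument goes through.
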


And, for the Frozen Gaussian representation \cref{eq:onelevelFGA} of the one-level semiclassical  dynamics \cref{eq:onelevelSchrodinger}, we have the following error estimate \cite{swart2009mathematical,lu2010frozen}:
\begin{theorem}
Let $u(t, x)$ be the exact solution for the one-level semiclassical dynamics \cref{eq:onelevelSchrodinger}. Assume that $U(x) \in C^{\infty}\left(\mathbb{R}^{m}\right)$. For any $t>0$, we have the following $L^{2}$ error estimate for the Frozen Gaussian representation \cref{eq:onelevelFGA}
\begin{equation}
\left\|u_{\mathrm{FG}}(t, \cdot)-u(t, \cdot)\right\|_{L^{2}} \leq C_{m}(t) \varepsilon\left\|u_{\mathrm{in}}(\cdot)\right\|_{L^{2}},
\end{equation}
where $C_m(t)$ is a constant independent of $\varepsilon$.
\end{theorem}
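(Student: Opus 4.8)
The plan is to run the classical \emph{consistency-plus-stability} argument for semiclassical wave-packet methods (the one-level estimate is in fact established in \cite{swart2009mathematical,lu2010frozen}, and the sketch below reconstructs it). Since \cref{eq:onelevelSchrodinger} conserves mass, the propagator $\mathrm{e}^{-\mathrm{i}t\widehat{H}/\varepsilon}$, $\widehat{H}=-\tfrac{\varepsilon^{2}}{2}\Delta+U$, is unitary on $L^{2}(\mathbb{R}^{m})$. I would introduce the residual
\begin{equation*}
r(t,x):=\mathrm{i}\varepsilon\,\partial_{t}u_{\mathrm{FG}}(t,x)+\tfrac{\varepsilon^{2}}{2}\Delta u_{\mathrm{FG}}(t,x)-U(x)\,u_{\mathrm{FG}}(t,x).
\end{equation*}
By the decomposition theorem stated above, $u_{\mathrm{FG}}(0,\cdot)=u_{\mathrm{in}}=u(0,\cdot)$, so $u_{\mathrm{FG}}$ and $u$ share their initial data; Duhamel's formula together with the $L^{2}$-isometry of the propagator then yields
\begin{equation*}
\bigl\|u_{\mathrm{FG}}(t,\cdot)-u(t,\cdot)\bigr\|_{L^{2}}\le\frac{1}{\varepsilon}\int_{0}^{t}\bigl\|r(s,\cdot)\bigr\|_{L^{2}}\,\mathrm{d}s .
\end{equation*}
Thus the theorem reduces to the \emph{consistency estimate} $\|r(s,\cdot)\|_{L^{2}}\le C_{m}(s)\,\varepsilon^{2}\,\|u_{\mathrm{in}}\|_{L^{2}}$, uniformly in $\varepsilon$.

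To get this I would differentiate \cref{eq:onelevelFGA} under the integral sign and substitute the flow equations \cref{eq:Hamiltoniandynamics}--\cref{eq:dynamicsforamplitude}. Acting with $\mathrm{i}\varepsilon\partial_{t}+\tfrac{\varepsilon^{2}}{2}\Delta-U(x)$ on a single wave packet $A\,\mathrm{e}^{\mathrm{i}\Theta/\varepsilon}$ and Taylor-expanding $U(x)$ about $x=Q(t,q,p)$ to second order (with integral remainder $R_{3}(x,Q)=O(|x-Q|^{3})$), the contributions that are constant and linear in $x-Q$ cancel by the action equation and Hamilton's equations \cref{eq:Hamiltoniandynamics}; the Laplacian applied to the Gaussian prefactor produces a $-\tfrac{\varepsilon m}{2}A$ term and a quadratic term $\tfrac12(x-Q)^{T}\bigl(I-\nabla^{2}U(Q)\bigr)(x-Q)\,A$, and once the quadratic term is rewritten through the identity $\partial_{z}\Theta=-\mathrm{i}Z^{T}(x-Q)$ (with $\partial_{z}=\partial_{q}-\mathrm{i}\partial_{p}$ and $Z=\partial_{z}(Q+\mathrm{i}P)$) it splits into an $O(\varepsilon^{2})$ piece plus an $O(\varepsilon)$ piece which, together with $-\tfrac{\varepsilon m}{2}A$, is annihilated by the amplitude equation \cref{eq:dynamicsforamplitude}. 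What survives is, schematically,
\begin{equation*}
r(t,x)=\frac{1}{(2\pi\varepsilon)^{3m/2}}\int_{\mathbb{R}^{2m}}B(t,x,q,p)\,\mathrm{e}^{\frac{\mathrm{i}\Theta(t,x,q,p)}{\varepsilon}}\,\mathrm{d}q\,\mathrm{d}p ,
\end{equation*}
where $B$ collects $A\,R_{3}(x,Q)$ and $\varepsilon^{2}$ times second $(q,p)$-derivatives of $A,Z^{-1},Q,P$.

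Then I would bound this oscillatory superposition in $L^{2}_{x}$ by $\varepsilon^{2}$, uniformly in $\varepsilon$. Using $(x-Q)\,\mathrm{e}^{\mathrm{i}\Theta/\varepsilon}=\varepsilon(Z^{-1})^{T}\partial_{z}\mathrm{e}^{\mathrm{i}\Theta/\varepsilon}$ one integrates by parts in $(q,p)$, trading each surplus factor $x-Q$ for a power of $\varepsilon$; combined with the elementary pointwise bounds $|x-Q|^{k}|\mathrm{e}^{\mathrm{i}\Theta/\varepsilon}|=|x-Q|^{k}\mathrm{e}^{-|x-Q|^{2}/(2\varepsilon)}\le C_{k}\varepsilon^{k/2}$, this exhibits $r$ as $\varepsilon^{2}$ times a frozen-Gaussian superposition with a smooth, controlled amplitude. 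The $L^{2}_{x}$ estimate of such a superposition cannot be done by the triangle inequality (which loses powers of $\varepsilon$): one either invokes the $L^{2}$-boundedness of the frozen-Gaussian synthesis operator $\widetilde B\mapsto(2\pi\varepsilon)^{-3m/2}\int\widetilde B\,\mathrm{e}^{\mathrm{i}\Theta/\varepsilon}\,\mathrm{d}q\,\mathrm{d}p$, whose norm is $O(\varepsilon^{-3m/4})$ by a $TT^{\ast}$/Schur-test argument exploiting the near-orthogonality of Gaussians centred at distinct phase-space points, or, equivalently, one pairs $r$ against a test function and moves the $(q,p)$-derivatives onto the (well-behaved) wave-packet transform of that test function; together with the Parseval identity $\|A(0,\cdot)\|_{L^{2}_{q,p}}=c_{m}\,\varepsilon^{3m/4}\,\|u_{\mathrm{in}}\|_{L^{2}}$ this gives the consistency bound. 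I expect this uniform $L^{2}$ control to be the main obstacle; the rest is bookkeeping.

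Finally, the $\varepsilon$-independence of $C_{m}(t)$ follows from Gronwall estimates on the variational (Jacobian) ODEs obtained by differentiating \cref{eq:Hamiltoniandynamics} in $(q,p)$: their coefficients are bounded in terms of $\sup|\nabla^{2}U|$ and $\sup|\nabla^{3}U|$ along trajectories, so $|A(t,q,p)|$, $\|Z(t,q,p)^{-1}\|$ and the finitely many $(q,p)$-derivatives of $Q,P,A,Z^{-1}$ entering $B$ grow at most exponentially in $t$ and uniformly in $(q,p)$ and $\varepsilon$; the non-degeneracy of $Z(t)$ that this requires is preserved by the Liouville-type identity for $\det Z(t)$. (Mild control of the derivatives of $U$ at infinity, ensuring completeness of the Hamiltonian flow, is implicit in the smoothness hypothesis.)
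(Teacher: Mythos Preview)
The paper does not prove this statement: it is quoted as a known result from \cite{swart2009mathematical,lu2010frozen}, with no argument given in the text. Your proposal correctly acknowledges this and reconstructs the standard consistency--plus--stability proof from those references; the outline (Duhamel reduction via unitarity, order-by-order cancellation of the residual through the Hamiltonian and amplitude ODEs, integration by parts in $(q,p)$ using $\partial_{z}\Theta=-\mathrm{i}Z^{T}(x-Q)$, and the uniform $L^{2}$-boundedness of the frozen-Gaussian synthesis operator) is indeed the approach taken in the cited works and is sound.
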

\subsection{Integral representation  based on the surface hopping ansatz}
\label{subsec:FGA_metal}
The integral representation of the solution of \cref{eq:discretize_u} is based on the following surface hopping ansatz: (as illustrated in \cref{fig:hopping})
\begin{itemize}
    \item A wave packet starts from the molecule orbital.
    \item While the wave packet is on the molecule orbital, it might hop onto one of the metal surfaces; while the wave packet is on one of the metal surfaces, it might hop back onto the molecule orbital. No hopping between different metal surfaces is allowed. 
    \item When the wave packet hasn't hopped out of the current energy surface, it would propagate subject to the dynamics specified by the current energy surface.
\end{itemize}

\begin{figure}[h]
    \centering
    \includegraphics[width=100mm]{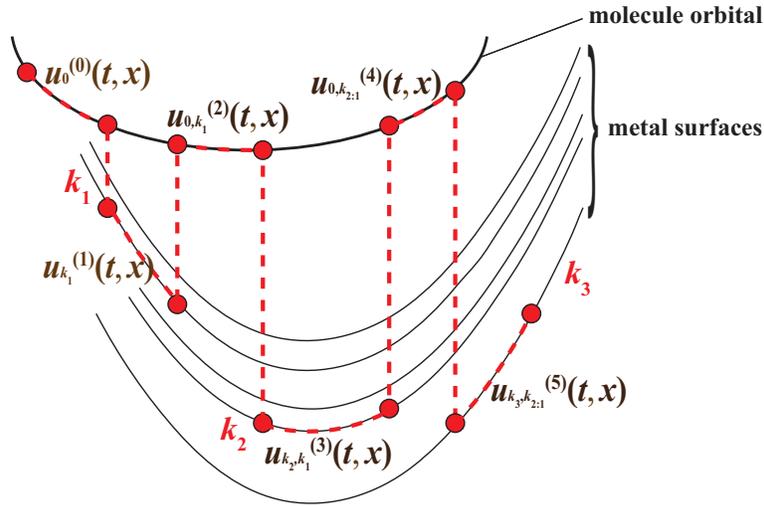}
    \caption{Illustration of the surface hopping ansatz: $k_1,k_2,k_3\in \{1,2,\cdots,N\}$ represents the indices of the metal orbitals that the wave packet has hopped onto. $u_0^{(0)}$ represents the wave packet that hasn't ever hopped; $u_{k_1}^{(1)}$ represents the wave packet that has hopped once and is on the $k_1$-th metal surface;  $u_{0, k_1}^{(2)}$ represents the wave packet that has hopped twice and was on the $k_1$-th metal surface before the second jump; $u_{k_2, k_1}^{(3)}$ represents the wave packet that has hopped three times, is currently  on the $k_2$-th metal surface  and was on the $k_1$-th metal surfaces in history. The meaning of $u_{0, k_{2:1}}^{(4)}$ and $u_{k_3, k_{2:1}}^{(5)}$  could be determined similarly.
    }
    \label{fig:hopping}
\end{figure}

The hopping probability and the detailed wave packet dynamics will be specified later.
For now, let us introduce some notations. Let $k_1\in \{1,2,\cdots,N\}$ be the index of the metal orbital that the wave packet is on when it is the first time for the wave packet to ever hop onto the metal surfaces. In the same way, we can define $k_2,k_3,\cdots$.
The wave packet on the molecule orbital is denoted as 
$$
u^{(2n)}_{0,k_{n:1}},\quad n=0,1,2,\cdots,\quad k_1,k_2,\cdots ,k_n\in \{1,2,\cdots,N\},
$$
where the superscript $2n$ indicates the times that the wave packet has already jumped (which must be an even integer since the wave packet has to hop out and  back to be on the molecule orbital), the subscript $0$ means that  the wave packet is currently on the molecule orbital, and  $k_{n:1}$  is the short hand notation for $k_1,k_2,\cdots,k_n$, which documents the indices of  all metal orbitals that this wave packet has ever jumped onto.
Similarly, the wave packet on the metal surfaces are denoted as
$$
u^{(2n+1)}_{k,k_{n:1}}, \quad n=0,1,2,\cdots,\quad k,k_1,k_2,\cdots ,k_n\in \{1,2,\cdots,N\},
$$
where the superscript is still the hopping times (an odd integer) and the subscript is the current metal orbital index $k$ and the historical metal orbitals indices $k_{n:1}$.
The wave packets could be written in the following unified formulation:
$$ u_{k,k_{[\nu/2]:1}}^{(\nu)},\quad  \nu = 0,1,2,\cdots,\quad k_1,k_2,\cdots ,k_n\in \{1,2,\cdots,N\}, $$
where $k=0$ when $\nu$ is even, which means the wave packet is on the molecule orbital, and $k\in\{1,\cdots,N\}$ when $\nu$ is odd, which means that the wave packet is on the $k$-th metal surface.
Later, when we introduce other quantities of the wave packet, we will use the same index system.

Now we are ready to present the surface hopping ansatz of the wavefunction:
\begin{equation}
\begin{aligned}
{u}_{0,\text{FG}}(t,x)&=  u_0^{(0)}(t,x)+ \sum_{k_1=1}^N u_{0,k_1}^{(2)}(t,x)+ \sum_{k_1,k_2=1}^N u_{0,k_{2:1}}^{(4)}(t,x)+\cdots,\\
{u}_{k,\text{FG}}(t,x)&=  u_{k}^{(1)}(t,x)+ \sum_{k_1=1}^N u_{k,k_1}^{(3)}(t,x)+ \sum_{k_1,k_2=1}^N u_{k,k_{2:1}}^{(5)}(t,x)+\cdots, k=1:N.
\end{aligned}
\end{equation}

Since $ u_0^{(0)}(t,x)$ represents the wave packet that only propagates on $U_0$ and hasn't switched to the metal surfaces, it is thus given by the same ansatz as in one-level system:
\begin{equation}
 u_0^{(0)}(t, x)=\frac{1}{(2 \pi \varepsilon)^{3 m / 2}} \int A_0^{(0)}\left(t, z_{0}\right) \exp \left(\frac{\mathrm i}{\varepsilon} \Theta_0^{(0)}\left(t, z_{0}, x\right)\right) \mathrm{d} z_{0}.
\end{equation}
Here $z_0=(q_0,p_0)$, and
\begin{equation}
\Theta_0^{(0)}(t, x, q, p)=S_0^{(0)}(t, q, p)+P_0^{(0)}(t, q, p) \cdot(x-Q_0^{(0)}(t, q, p))+\frac{\mathrm{i}}{2}|x-Q_0^{(0)}(t, q, p)|^{2},
\end{equation}
and the dynamics of $Q_0^{(0)}$, $P_0^{(0)}$, $A_0^{(0)}$, $S_0^{(0)}$ will be specified below.

For $\nu>0$, let us denote $T_{\nu: 1}=\left(t_{\nu}, \ldots, t_{1}\right)$ as a sequence of times in decreasing order:
$$
t_0=0 \leq t_{1} \leq t_{2} \leq \cdots \leq t_{\nu} \leq t=t_{\nu+1}.
$$
And now we are ready to introduce our ansatz for $\nu=2n$ and $\nu=2n+1$:
\begin{equation}
\begin{aligned}
 u_{0,k_{n:1}}^{(2n)}(t, x)=  \frac{1}{(2 \pi \varepsilon)^{3 m / 2}} &\int \mathrm{d} z_{0} \int_{0}^{t} \mathrm{~d} t_{2n} \int_{0}^{t_{2n}} \mathrm{~d} t_{2n-1} \cdots\int_{0}^{t_{2}} \mathrm{d} t_{1}\\
  &\left(\prod_{j=1}^{n}\left(h\tau_{k_j,k_{j-1:1}}^{(2j-1)}\left(T_{2j-1: 1}, z_{0}\right)\tau_{{0,k_{j:1}}}^{(2j)}\left(T_{2j: 1}, z_{0}\right)\right) \right)
  \\&\times A_{0,k_{n:1}}^{(2n)}\left(t, T_{2n: 1}, z_{0}\right) \exp \left(\frac{\mathrm i}{\varepsilon} \Theta_{0,k_{n:1}}^{(2n)}\left(t, T_{2n: 1}, z_{0}, x\right)
\right).
\end{aligned}
\end{equation}

\begin{equation}
\begin{aligned}
 u_{k,k_{n:1}}^{(2n+1)}(t, x)=& \frac{1}{(2 \pi \varepsilon)^{3 m / 2}}  \int \mathrm{d} z_{0} \int_{0}^{t} \mathrm{~d} t_{2n+1} \int_{0}^{t_{2n}} \mathrm{~d} t_{2n} \cdots\int_{0}^{t_{2}} \mathrm{d} t_{1}\\
   &\tau_{k,k_{n:1}}^{(2n+1)}\left(T_{2n+1: 1}, z_{0}\right)\times \left(\prod_{j=1}^{n}\left(h\tau_{k_j,k_{j-1:1}}^{(2j-1)}\left(T_{2j-1: 1}, z_{0}\right)\tau_{0,k_{j:1}}^{(2j)}\left(T_{2j: 1}, z_{0}\right)\right) \right)\\
&\times A_{k,k_{n:1}}^{(2n+1)}\left(t, T_{2n+1: 1}, z_{0}\right) \exp \left(\frac{\mathrm i}{\varepsilon} \Theta_{k,k_{n:1}}^{(2n+1)}\left(t, T_{2n+1: 1}, z_{0}, x\right)
\right).
\end{aligned}
\end{equation}
Here $h=(\mathcal{E}_b-\mathcal{E}_a)/N$.  
For both $\nu=2n$ and $\nu = 2n+1$, we have
\begin{equation}
\begin{aligned}
\Theta_{k,k_{[{\nu}/{2}]:1}}^{(\nu)}\left(t, T_{\nu: 1}, z_{0}, x\right)=& S_{k,k_{[{\nu}/{2}]:1}}^{(\nu)} \left(t, T_{\nu: 1}, z_{0}\right)+\frac{\mathrm i}{2}\left|x-Q_{k,k_{[{\nu}/{2}]:1}}^{(\nu)}\left(t, T_{\nu: 1}, z_{0}\right)\right|^{2}\\+&P_{k,k_{[{\nu}/{2}]:1}}^{(\nu)}\left(t, T_{\nu: 1}, z_{0}\right) \cdot\left(x-Q_{k,k_{[{\nu}/{2}]:1}}^{(\nu)}\left(t, T_{\nu: 1}, z_{0}\right)\right).
\end{aligned}
\label{eq:Thetarelation}
\end{equation}
Compared with the Frozen Gaussian representation for one-level system \cref{eq:onelevelFGA}, the above ansatz includes the integration over all possible jumping times, while the parameters $\tau_{k,k_{[\nu/2]:1}}^{(\nu)}$ accounting for these jumps are to be determined.  

Based on the ansatz, we can determine the dynamics of $S_{k,k_{[{\nu}/{2}]:1}}^{(\nu)},P_{k,k_{[{\nu}/{2}]:1}}^{(\nu)}$, $Q_{k,k_{[{\nu}/{2}]:1}}^{(\nu)}$, $A_{k,k_{[{\nu}/{2}]:1}}^{(\nu)}$, and the value of $\tau_{k,k_{[\nu/2]:1}}^{(\nu)}$.
The detailed derivation is based on the asymptotic matching, and is stated in \ref{sec:derivation}.
The dynamics of $S_{k,k_{[{\nu}/{2}]:1}}^{(\nu)},P_{k,k_{[{\nu}/{2}]:1}}^{(\nu)}, Q_{k,k_{[{\nu}/{2}]:1}}^{(\nu)}$ and $ A_{k,k_{[{\nu}/{2}]:1}}^{(\nu)}$  for $t\in [t_{\nu},t_{\nu+1}]$ are
\begin{equation}
\begin{aligned}
\frac{\mathrm{d}}{\mathrm{d} t} Q_{k,k_{[{\nu}/{2}]:1}}^{(\nu)}=& P_{k,k_{[{\nu}/{2}]:1}}^{(\nu)}, \\
\frac{\mathrm{d}}{\mathrm{d} t} P_{k,k_{[{\nu}/{2}]:1}}^{(\nu)}=&-\nabla_Q \widetilde U\left(t,Q_{k,k_{[{\nu}/{2}]:1}}^{(\nu)}\right), \\
\frac{\mathrm{d}}{\mathrm{d} t} S_{k,k_{[{\nu}/{2}]:1}}^{(\nu)}=& \frac{1}{2}\left(P_{k,k_{[{\nu}/{2}]:1}}^{(\nu)}\right)^{2}-\widetilde U\left(t,Q_{k,k_{[{\nu}/{2}]:1}}^{(\nu)}\right), \\
\frac{\mathrm{d}}{\mathrm{d} t} A_{k,k_{[{\nu}/{2}]:1}}^{(\nu)}=& \frac{1}{2} A_{k,k_{[{\nu}/{2}]:1}}^{(\nu)} \operatorname{tr}\left(\left(Z^{(\nu)}\right)^{-1}\left(\partial_{z} P_{k,k_{[{\nu}/{2}]:1}}^{(\nu)}-i \partial_{z} Q_{k,k_{[{\nu}/{2}]:1}}^{(\nu)} \nabla_{Q}^{2} \widetilde U\left(t,Q_{k,k_{[{\nu}/{2}]:1}}^{(\nu)}\right)\right)\right),
\end{aligned}
\label{eq:groupdynamics}
\end{equation}
where 
\begin{equation}
    \widetilde{U}(t,Q)=\left\{
    \begin{aligned}
    U_0(Q),\quad &t\in [t_{2j}, t_{2j+1}]
    \\
    U_1(Q)+\mathcal{E}_{k_j},\quad &t\in [t_{2j-1}, t_{2j}]
    \end{aligned}
    \right. ,
\end{equation}
and the initial value of time interval $[0,t_1]$ is
\begin{equation}
    Q_0^{(0)}(0,z_0)=q,\quad P_0^{(0)}(0,z_0)=q,\quad S_0^{(0)}(0,z_0)=0,
    \label{eq:initialQPS}
\end{equation}
\begin{equation}
     A_0^{(0)}(0, z_0)=2^{\frac{m}{2}} \int_{\mathbb{R}^{m}} u_{0,\mathrm{in}}(y) e^{\frac{\mathrm i}{\varepsilon}\left(-p \cdot(y-q)+\frac{i}{2}|y-q|^{2}\right)} \mathrm{d} y.
     \label{eq:dynamicsforamplitude2}
\end{equation}
and the initial value on time interval $[t_{\nu},t_{\nu+1}]$ is chosen to makes sure all of them are continuous functions.
\begin{equation}
\begin{aligned}
Q_{k,k_{\lceil{(\nu+1)}/{2}\rceil:1}}^{(\nu+1)} \left(t_{\nu+1}, T_{\nu+1: 1}, z_{0}\right)&=
    Q_{k,k_{[{\nu}/{2}]:1}}^{(\nu)} \left(t_{\nu+1}, T_{\nu: 1}, z_{0}\right),\\P_{k,k_{\lceil{(\nu+1)}/{2}\rceil:1}}^{(\nu+1)} \left(t_{\nu+1}, T_{\nu+1: 1}, z_{0}\right)&=
    P_{k,k_{[{\nu}/{2}]:1}}^{(\nu)} \left(t_{\nu+1}, T_{\nu: 1}, z_{0}\right),\\
    S_{k,k_{\lceil{(\nu+1)}/{2}\rceil:1}}^{(\nu+1)} \left(t_{\nu+1}, T_{\nu+1: 1}, z_{0}\right)&=
    S_{k,k_{[{\nu}/{2}]:1}}^{(\nu)} \left(t_{\nu+1}, T_{\nu: 1}, z_{0}\right),\\
A_{k,k_{\lceil{(\nu+1)}/{2}\rceil:1}}^{(\nu+1)} \left(t_{\nu+1}, T_{\nu+1: 1}, z_{0}\right)&=
    A_{k,k_{[{\nu}/{2}]:1}}^{(\nu)} \left(t_{\nu+1}, T_{\nu: 1}, z_{0}\right).
\end{aligned}
\label{eq:initialQPSA2}
\end{equation}
Now the only thing left to be specified is  $\tau_{k,k_{[\nu/2]:1}}^{(\nu)}$. The detailed derivation of $\tau_k^{\nu}$ could also be seen in
\cref{sec:derivation}. We have
\begin{equation}
\begin{aligned}
\tau_{0, k_{n:1}}^{(2n)}\left(t , T_{2n-1:1}, z_{0}\right)&=-\mathrm i V\left(\mathcal{E}_{k_n}, Q_{k_n,k_{n-1:1}}^{(2n-1)}\left(t, T_{2n-1:1}, z_{0}\right)\right),
\\
\tau_{k,k_{n:1}}^{(2n+1)}\left(t , T_{2n:1}, z_{0}\right)&=-\mathrm i{\left(\overline{V}\left(\mathcal{E}_{k}, Q_{0,k_{n:1}}^{(2n)}\left(t, T_{2n:1}, z_{0}\right)\right)\right)}.
\end{aligned}
\label{eq:hoppingcoefficient}
\end{equation}
We  refer to $\tau_{k,k_{[\nu/2]:1}}^{(\nu)}$ as the hopping coefficient associated with the $k$-th level,  and we will see in \cref{subsec:Stochasticinterpretation} that $\tau_{k,k_{[\nu/2]:1}}^{(\nu)}$ is used to define the hopping rate onto or away from the $k$-th level in the jump process that we will construct.

\subsection{Stochastic interpretation}
\label{subsec:Stochasticinterpretation}
In order to implement the Frozen Gaussian representation with an efficient  algorithm, we introduce its stochastic interpretation. 
We will construct a stochastic process $\widetilde{z}_t$, such that the above integral representation could be represented as taking expectations over all the path $\widetilde{z}_t$.

The stochastic process $\widetilde{z}_t=(z_t,l_t)$ is defined on the extended phase space $\mathbb{R}^{2m}\times\{0,1,\cdots,N\}$. Here $l_t\in\{0,1,\cdots,N\}$ represents the energy surface that the trajectory is currently on at time $t$. The evolution of $z_t$ is determined by  the energy surface specified by $l_t$:
\begin{equation}
\mathrm{d} z_{t}=
\left\{
\begin{aligned}
\left(p_{t},-\nabla_{q} U_0\left( q_{t}\right)\right) \mathrm{d} t,&\quad l_t=0, \\
\left(p_{t},-\nabla_{q} U_1\left( q_{t}\right)\right) \mathrm{d} t,&\quad l_t=1,\cdots,N.
\end{aligned}
\right.
\label{eq:Hanmiltonianflow}
\end{equation}
Here we take advantage of the fact that $\nabla_q \left(U_1(q)+\mathcal{E}_k\right)=\nabla_q U_1(q)$ since $\mathcal{E}_k$ is a constant. $z_t$ is coupled with $l_t$, which follows a jump process, with the following infinitesimal transition rate:
\begin{equation}
\mathbb{P}\left(l_{t+\delta t}=k' \mid l_{t}=k, z_{t}=z\right)=\delta_{kk'}+\lambda_{kk'}(z) \delta t+o(\delta t),
\label{eq:jumpprob}
\end{equation}
where 
\begin{equation}
 \lambda_{kk'}(z)=\left\{
\begin{aligned}
h\left|V\left(\mathcal{E}_k,q_t\right)\right|,\quad & k=0, k'=1,\cdots,N \\
\left|V\left(\mathcal{E}_k,q_t\right)\right|,\quad & k'=0, k=1,\cdots,N \\
-h
\sum_{k=1}^N\left|V\left(\mathcal{E}_k,q_t\right)\right|,\quad & k=k'=0\\
-\left|V\left(\mathcal{E}_k,q_t\right)\right|,\quad & k=k'=1,\cdots,N\\
0,\quad & {\text{ otherwise}}
\end{aligned}\right.\quad ,
\end{equation}
\begin{equation}
\begin{aligned}
 \lambda(z)&=
\left(\begin{array}{cccc}
\lambda_{00}(z) & \lambda_{01}(z)& \cdots & \lambda_{0N}(z)\\
\lambda_{10}(z) & \lambda_{11}(z) & \cdots & \lambda_{1N}(z) \\
\vdots &\vdots &\ddots &\vdots \\
\lambda_{N0}(z) & \lambda_{N1}(z) & \cdots & \lambda_{NN}(z) 
\end{array}\right)\\
&=
\left(\begin{array}{cccc}
-h
\sum_{k=1}^N\left|V\left(\mathcal{E}_k,q_t\right)\right| & h|V(\mathcal{E}_1,q_t)|& \cdots & h|V(\mathcal{E}_N,q_t)|\\
\left|V\left(\mathcal{E}_1,q_t\right)\right| & -\left|V\left(\mathcal{E}_1,q_t\right)\right| & \cdots & 0\\
\vdots &\vdots &\ddots &\vdots \\
\left|V\left(\mathcal{E}_N,q_t\right)\right| & 0 & \cdots & -\left|V\left(\mathcal{E}_N,q_t\right)\right|
\end{array}\right) .
\end{aligned}
\end{equation}
With $\lambda(z)$, we can write down now the forward Kolmogorov equation specified by the above hopping strategy
\begin{equation}
\frac{\partial}{\partial t} \rho_k(t,z)+\left\{h_{k}, \rho_k(t,z)\right\}=\sum_{m=0}^{N} \lambda_{m k}(z) \rho_k(t,z),
\label{eq:forwardkolmogorov}
\end{equation}
where $\rho_k(t,z)$ represents the phase space density function of the $k$-th level, and $\{\cdot\}$ is the Poisson bracket:
$$
\{h, F\}=\partial_{p} h \cdot \partial_{q} F-\partial_{q} h \cdot \partial_{p} F.
$$
Note that we have $\tau_{k,k_{[\nu/2]:1}}^{(\nu)}\left(t, T_{\nu: 1}, z_{0}\right)=\left|V(\mathcal{E}_k,Q_{k_{[\nu/2]:1}}^{(\nu-1)})\right|$. For $\nu=2n+1$, let us define $\widetilde{\lambda}_{k_{n:1}}^{(2 n+1)}\left(t, T_{2 n: 1}, z_{0}\right)$ for future purpose:
\begin{equation}
\widetilde\lambda_{k_{n:1}}^{(2 n+1)}\left(t, T_{2 n: 1}, z_{0}\right)=\sum_{k=1}^{N} h \left|\tau_{k,k_{n:1}}^{(2 n+1)}\left(t, T_{2 n: 1}, z_{0}\right)\right|.
\label{eq:lambda}
\end{equation}

Let $\mathbb{P}_{2n}(t,z_0)$ be the probability that there is $2n$ jumps in $[0,t]$. Let $\mathbb{P}_{2n+1}(t,k,z_0)$ be the probability that there is $(2n+1)$ jumps in $[0,t]$ and $l_t=k$. By definition, we have
\begin{equation}
1=\sum_{n=0}^{\infty} \mathbb{P}_{2 n}(t,z_0)+\sum_{n=0}^{\infty} \sum_{k=1}^{N} \mathbb{P}_{2 n+1}\left(t, k,z_0\right).
\end{equation}
By the properties of jumping process, we have
\begin{equation}
\begin{aligned}
\mathbb P_{2 n}(t,z_0)=\sum_{k_1,\cdots,k_n=1}^{N}\iint\cdots\int_{[0,t]^{2 n}}  &\mathrm dt_{2 n} \cdots  \mathrm dt_{1} \\ &\rho^{(2 n)}_{0,k_{n:1}}\left(T_{2 n:1},z_0\right)\Theta\left(0<t_{1}< \cdots<t_{2 n}< t\right),
\end{aligned}
\end{equation}
\begin{equation}
\begin{aligned}
   \mathbb P_{2 n+1}(t,k,z_0)=\sum_{k_1,\cdots,k_n,k=1}^{N}\iint\cdots\int_{[0,t]^{2 n+1}}  \mathrm dt_{2 n+1}&\cdots  \mathrm dt_{1} \\
\rho^{(2 n+1)}_{k,k_{n:1}}\left(T_{2 n+1:1},z_0\right)&\Theta\left(0<t_{1}< \cdots<t_{2 n+1}< t\right).
\end{aligned}
\end{equation}
where
\begin{equation}
\begin{aligned}
   &\rho^{(2 n)}_{0,k_{n:1}}\left(T_{2 n:1},z_0\right)  = \prod_{j=0}^{n}\left(\exp \left(-\int_{t_{2 j}}^{t_{2 j+1}} \mathrm d s_{2 j+1} \widetilde{\lambda}_{k_{n:1}}^{(2 n+1)}\left(s_{2 j+1}, T_{2 n: 1}, z_{0}\right)\right)\right)\\
   &\times  \left(\prod_{j=1}^{n} h\left|\tau_{k,k_{j-1:1}}^{(2 j-1)}\left(T_{2 j-1:1}\right)\right| \mathrm{e}^{\left(-\int_{t_{2 j-1}}^{t_{2 j}}\left|\tau_{0,k_{j:1}}^{(2 j)}\left(s_{2 j}, T_{2 j-1:1}\right)\right|\mathrm d s_{2 j}\right)}\left|\tau_{0,k_{j:1}}^{(2 j)}\left(T_{2 j:1}\right)\right|\right),
\end{aligned}
\end{equation}
and
\begin{equation}
\begin{aligned}
   &\rho^{(2 n+1)}_{k,k_{n:1}}\left(T_{2 n+1:1},z_0\right)  =
   h\left|\tau_{k,k_{j:1}}^{(2 n+1)}\left(T_{2 n+1:1}\right)\right|
   \exp \left(-\int_{t_{2n+1}}^{t} \mathrm d s \tau_{k,k_{n:1}}^{(2 n+2)}\left(s, T_{2 j-1: 1}, z_{0}\right)\right)
   \\
   &\times
   \prod_{j=0}^{n}\left(\exp \left(-\int_{t_{2 j}}^{t_{2 j+1}} \mathrm d s_{2 j+1} \widetilde{\lambda}_{k_{n:1}}^{(2 n+1)}\left(s_{2 j+1}, T_{2 n: 1}, z_{0}\right)\right)\right)\\
   &  \times\left(\prod_{j=1}^{n} h\left|\tau_{k,k_{j-1:1}}^{(2 j-1)}\left(T_{2 j-1:1}\right)\right| \mathrm{e}^{\left(-\int_{t_{2 j-1}}^{t_{2 j}}\left|\tau_{0,k_{j:1}}^{(2 j)}\left(s_{2 j}, T_{2 j-1:1}\right)\right|\mathrm d s_{2 j}\right)}\left|\tau_{0,k_{j:1}}^{(2 j)}\left(T_{2 j:1}\right)\right|\right).
\end{aligned}
\end{equation}
Based on these, we have :

\begin{equation}
\begin{aligned}
 u_{0,k_{n:1}}^{(2n)}(t, x)&=  \frac{1}{(2 \pi \varepsilon)^{3 m / 2}}  \int \mathrm{d} z_{0} \int_{0}^{t} \mathrm{~d} t_{2n} \int_{0}^{t_{2n}} \mathrm{~d} t_{2n-1} \cdots\int_{0}^{t_{2}} \mathrm{d} t_{1}\rho^{(2 n)}_{0,k_{n:1}}(T_{2n:1})\\
 &\times A_{0,k_{n:1}}^{(2n)}\left(t, T_{2n: 1}, z_{0}\right) \exp \left(\frac{\mathrm i}{\varepsilon} \Theta_{0,k_{n:1}}^{(2n)}\left(t, T_{2n: 1}, z_{0}, x\right)
\right) \Gamma_{0,k_{n:1}}^{(2n)}(t,T_{2n:1}),
\end{aligned}
\label{eq:ueven}
\end{equation}
\begin{equation}
\begin{aligned}
 &u_{k,k_{n:1}}^{(2n+1)}(t, x)= \frac{1}{(2 \pi \varepsilon)^{3 m / 2}}  \frac{1}{h}\int \mathrm{d} z_{0} \int_{0}^{t} \mathrm{~d} t_{2n+1} \int_{0}^{t_{2n}} \mathrm{~d} t_{2n} \cdots\int_{0}^{t_{2}} \mathrm{d} t_{1}\rho^{(2 n+1)}_{k,k_{n:1}}\left(T_{2 n+1:1}\right)\\
 & \times A_{k,k_{n:1}}^{(2n+1)}\left(t, T_{2n+1: 1}, z_{0}\right) \exp \left(\frac{\mathrm i}{\varepsilon} \Theta_{k,k_{n:1}}^{(2n+1)}\left(t, T_{2n+1: 1}, z_{0}, x\right)
\right)\Gamma_{k,k_{n:1}}^{(2n+1)}(t,T_{2n+1:1}),
\end{aligned}
\label{eq:uodd}
\end{equation}
where
\begin{equation}
\begin{aligned}
&\Gamma_{0,k_{n:1}}^{(2n)}(t,T_{2n:1},z_0)=\prod_{j=0}^{n}\left(\exp \left(\int_{t_{2 j}}^{t_{2 j+1}} \mathrm d s_{2 j+1} \widetilde{\lambda}_{k_{n:1}}^{(2 n+1)}\left(s_{2 j+1}, T_{2 n: 1}, z_{0}\right)\right)\right)
 \\
  &\times\prod_{j=1}^{n}\frac{\tau_{k_j,k_{j-1:1}}^{(2j-1)}\left(T_{2j-1: 1}\right)\tau_{{0,k_{j:1}}}^{(2j)}\left(T_{2j: 1}\right)\exp\left({\left(\int_{t_{2 j-1}}^{t_{2 j}}
  \left|\tau_{0,k_{j:1}}^{(2 j)}\left(s_{2 j}, T_{2 j-1:1}\right)\right|\mathrm d s_{2 j}\right)}\right)}{\left|\tau_{k_j,k_{j-1:1}}^{(2 j-1)}\left(T_{2 j-1:1}\right)\right|    
  \left|\tau_{0,k_{j:1}}^{(2 j)}\left(T_{2 j:1}\right)\right|},
\end{aligned}
\label{eq:Gammaeven}
\end{equation}
\begin{equation}
\begin{aligned}
 &\Gamma_{k,k_{n:1}}^{(2n+1)}(t,T_{2n+1:1},z_0)=\prod_{j=0}^{n}\left(\exp \left(\int_{t_{2 j}}^{t_{2 j+1}} \mathrm d s_{2 j+1} \widetilde{\lambda}_{k_{n:1}}^{(2 n+1)}\left(s_{2 j+1}, T_{2 n: 1}, z_{0}\right)\right)\right)\\
   &\times\exp \left(\int_{t_{2n+1}}^{t}  \tau_{0,k_{n+1:1}}^{(2 n+2)}\left(s, T_{2 j-1: 1}\right)\mathrm d s\right)\times \frac{\tau_{k,k_{n:1}}^{(2n+1)}\left(T_{2n+1: 1}\right)}{\left|\tau_{k,k_{n:1}}^{(2n+1)}\left(T_{2n+1: 1}\right)\right|}
 \\
  &\times\prod_{j=1}^{n}\frac{\tau_{k_j,k_{j-1:1}}^{(2j-1)}\left(T_{2j-1: 1}\right)\tau_{{0,k_{j:1}}}^{(2j)}\left(T_{2j: 1}\right)\exp\left({\left(\int_{t_{2 j-1}}^{t_{2 j}}
  \left|\tau_{0,k_{j:1}}^{(2 j)}\left(s_{2 j}, T_{2 j-1:1}\right)\right|\mathrm d s_{2 j}\right)}\right)}{\left|\tau_{k,k_{j-1:1}}^{(2 j-1)}\left(T_{2 j-1:1}\right)\right|    
  \left|\tau_{0,k_{j:1}}^{(2 j)}\left(T_{2 j:1}\right)\right|}.
\end{aligned}
\label{eq:Gammaodd}
\end{equation}

Finally, we are ready to present the Frozen Gaussian representation as an average over stochastic path.
Let
\begin{equation}
    \mathbf U(t,x)=\left(
    \begin{array}{c}
     u_0(t,x) \\ u_1(t,x) \\ \cdots \\u_N(t,x)
    \end{array}
    \right),
\end{equation}
And let $\mathbf U_{\text{FG}}(t,x)$ 
be the Frozen Gaussian representation of $\mathbf U(t,x)$.  Rewriting \cref{eq:ueven}, \cref{eq:uodd} with \cref{eq:Gammaeven} and \cref{eq:Gammaodd}, we have 
\begin{equation}
   \mathbf{U}_{\text{FG}}(t,x)=\mathbb{E}_{\widetilde z_t}\left(
   \Lambda(t,x;\widetilde z_t)
    \right),
    \label{eq:expectation}
\end{equation}
where 
\begin{equation}
   \Lambda(t,x;\widetilde z_t)= 
    W(t,x;\widetilde z_t){\mathbb{V}}^{(k)},
    \label{eq:wavepacket}
\end{equation}
\begin{equation}
   W(t,x;\widetilde z_t)= \frac{1}{(2\pi\varepsilon)^{3m/2}}\frac{1}{\pi(z_0)}A_{k,k_{[{\nu}/{2}]:1}}^{(\nu)}\exp\left(\frac{\mathrm i}{\varepsilon}\Theta_{k,k_{[{\nu}/{2}]:1}}^{(\nu)}\right)
    \Gamma_{k,k_{[\nu/2]:1}}^{(\nu)}.
    \label{eq:wavepacketscalar}
\end{equation}
Here $k$ is the index of surface that the wave packet is currently on, $\pi(z_0)$ is the probability density that $z_0$ is sampled from, and $\mathbb{V}^{(\nu)}$  
is an  $(N+1)\times 1$ vector, its $j$-th component is ($j=0,1,
\cdots,N$)
\begin{equation}
    {\mathbb{V}}_j^{(\nu)}=\left\{
    \begin{aligned}
    \delta_{0j},\quad \nu \text{ is even}, \\
    \frac{1}{h}\delta_{k_{\lceil{\nu}/{2}\rceil}j},\quad \nu \text{ is odd}.
    \end{aligned}
    \right.
\end{equation}
Note that $\Lambda(t,x;\widetilde z_t)$ is a $(N+1)$-dimensional vector-valued function  and $W(t,x;\widetilde z_t)$ is a scalar-valued function. 

We emphasize that  $\Lambda(t,x;\widetilde z_t)$ could be viewed as a functional of the stochastic trajectory $\widetilde z_t$. As a result, according to \cref{eq:expectation}, the wavefunction $\mathbf{U}_{\text{FG}}$ is in fact the average of $\Lambda(t,x;\widetilde z_t)$ \cref{eq:wavepacket} over the stochastic path $\widetilde z_t$. From an algorithmic perspective, the sampling of the stochastic path $\widetilde z_t $ is achieved by first sampling its initial value $z_0\sim \pi(z)$ and then a stochastic evolution subject to the Hamiltonian flow \cref{eq:Hanmiltonianflow} and the jump process \cref{eq:jumpprob}.

\begin{remark}
Later We'll use $\Lambda_k(t,x;\widetilde z_t)$ to represent the $k$-th component of $\Lambda$ for $k=0,1,\cdots,N$.  
We will also use $\mathbf U_{k,\operatorname{FG}}(t,x)$ 
to represent the $k$-th component of $\mathbf U_{\operatorname{FG}}(t,x)$ 
for $k=0,1,\cdots,N$.
\end{remark}

\subsection{FGS Algorithm}
\label{subsec:algorithm}
Based on the stochastic interpretation \cref{eq:expectation}, we are able to constuct the  Frozen Gaussian Sampling (FGS) algorithm, as illustrated in \cref{fig:FGS}.

\begin{figure}[h]
    \centering
    \includegraphics[width = 120mm]{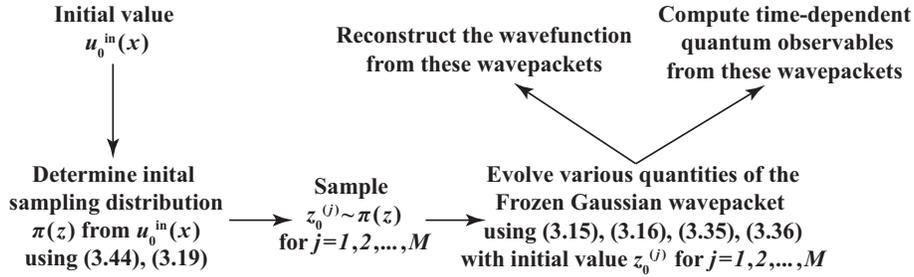}
    \caption{Schematics of Frozen Gaussian Sampling algorithm for metal surfaces}
    \label{fig:FGS}
\end{figure}

After drawing $M$ independent initial wave packets by a properly chosen phase space distribution, each wavepacket evolves subject to the Hamiltonian flow \cref{eq:Hanmiltonianflow} and hops between energy surfaces following the jump process \cref{eq:jumpprob}.
Then at time $t$, we can reconstruct the solution for system \cref{eq:discretize_u} with the average of $M$ wave packets:
\begin{equation}
   \mathbf{U}_{\text{FG}}(t,x)\approx \frac{1}{M}\sum_{j=1}^M\Lambda^{(j)}\left(t,x;\widetilde{z}_t^{(j)}\right).
\end{equation}
Here the superscript $(j)$ indicates the $j$-th wavepacket that we have sampled.
We summarize these procedures in the following \cref{al:mainalgorithm}.
\begin{algorithm}
\caption{Frozen Gaussian Sampling algorithm for nonadiabatic quantum dynamics at metal surfaces}
\label{al:mainalgorithm}
\begin{algorithmic}[1]
\STATE Initial Sampling: generate $M$ i.i.d samples $\left\{z_0^{(j)}\right\}_{j=1}^M$ according to the distribution $\pi$, and then calculate the corresponding $A_0^{(0)}\left(0,z_0^{(j)}\right)$ using equation \cref{eq:dynamicsforamplitude2}.
\STATE Trajectory evolving: for each $j=1,\cdots,M$, with $z_0^{(j)}$ and $A_0^{(0)}\left(0,z_0^{(j)}\right)$,  calculate $A_{k^j_{[{\nu_j}/{2}]:1}}^{(\nu_j)}$, $P_{k^j_{[{\nu_j}/{2}]:1}}^{(\nu_j)}$,
$Q_{k^j_{[{\nu_j}/{2}]:1}}^{(\nu_j)}$ and $S_{k^j_{[{\nu_j}/{2}]:1}}^{(\nu_j)}$ using \cref{eq:groupdynamics}, and then calculate $\Theta_{k^j_{[{\nu_j}/{2}]:1}}^{(\nu_j)}$ using \cref{eq:Thetarelation} and $\Gamma_{k^j,k^j_{[\nu_j/2]:1}}^{(\nu_j)}$ using \cref{eq:Gammaeven,eq:Gammaodd}.
\STATE Wavefunction constructing: calculate $\frac{1}{M}\sum_{j=1}^M\Lambda^{(j)}\left(t,x;\widetilde{z}_t^{(j)}\right)$, where the wave packet $\Lambda^{(j)}$ 
 is constructed using \cref{eq:wavepacket}.
\end{algorithmic}
\end{algorithm}

We can also calculate time-dependent physical observables. We would like to emphasize that this could be done without the reconstruction of wavefunctions. Consider a physical observable $\hat O$, which corresponds to a compact operator on $L^2(\mathbb{R}^m,\mathbb{C}^{N+1})$, we have
\begin{equation}
    \hat{O}(t) = \frac{1}{M^2}\sum_{j,j'=1}^M \left\langle\Lambda^{(j)}\left(t,x;\widetilde{z}_t^{(j)}\right)\right|\hat O\left|\Lambda^{(j')}\left(t,x;\widetilde{z}_t^{(j')}\right)\right\rangle.
\end{equation}
It should be noted that since $\Lambda^{(j)}\left(t,x,z_0^{(j)}\right)$ has the frozen Gaussian form, therefore for a wide class of observables, the integral in the double sum can be evaluated analytically or approximately by Gaussian integral formulas.

We often consider the Gaussian type initial conditions, where the initial value $u_{0,\text{in}}$ is a Gaussian wave packet:
\begin{equation}
u_{0,\mathrm{in}}(x)=\left(\prod_{j=1}^{m} a_{j}\right)^{\frac{1}{4}}(\pi \varepsilon)^{-\frac{m}{4}} \exp \left(\frac{\mathrm{i}}{\varepsilon} \tilde{p} \cdot(x-\tilde{q})\right) \exp \left(-\sum_{j=1}^{m} \frac{a_{j}}{2 \varepsilon}\left(x_{j}-\tilde{q}_{j}\right)^{2}\right).
\label{eq:initialgaussian}
\end{equation}
Here $\tilde{p}, \tilde{q} \in \mathbb{R}^{m}$ and $a_{j}(1 \leq j \leq m)$ are positive. In this case, we have
\begin{equation}
\begin{aligned}
A_0^{(0)}(0, q, p)=2^{m}(\pi \varepsilon)^{\frac{m}{4}} \prod_{j=1}^{m} & {\left[\left(\frac{\sqrt{a_{j}}}{1+a_{j}}\right)^{\frac{1}{2}} \exp \left(-\frac{\left(\tilde{p}_{j}-p_{j}\right)^{2}+a_{j}\left(\tilde{q}_{j}-q_{j}\right)^{2}}{2\left(1+a_{j}\right) \varepsilon}\right)\right.} \\
&\left.\times \exp \left(\frac{\mathrm{i}\left(a_{j} \tilde{q}_{j}+q_{j}\right)\left(\tilde{p}_{j}-p_{j}\right)}{\left(1+a_{j}\right) \varepsilon}+\frac{\mathrm{i}\left(p_{j} q_{j}-\tilde{p}_{j} \tilde{q}_{j}\right)}{\varepsilon}\right)\right],
\end{aligned}
\end{equation}
A suitable initial sampling distribution for Gaussian wave packet is \cite{lu2018frozen,xie2021frozen}
\begin{equation}
    \pi(q,p)=\frac{A_0^{(0)}(0,q,p)}{\int_{\mathbb{R}^{2m}}\mathrm{d}q\mathrm{d}p|A_0^{(0)}(0,q,p)|}.
\end{equation}
therefore we have \begin{equation}
\pi(q, p)=(2 \pi \varepsilon)^{-m} \prod_{j=1}^{m}\left(\frac{\sqrt{a_{j}}}{1+a_{j}}\right) \exp \left(-\sum_{j=1}^{m} \frac{\left(\tilde{p}_{j}-p_{j}\right)^{2}+a_{j}\left(\tilde{q}_{j}-q_{j}\right)^{2}}{2\left(1+a_{j}\right) \varepsilon}\right).
\end{equation}
It is clear that $\pi$ is a Gaussian distribution and is extremely easy to sample. In other words, we have the initial condition for $\cref{eq:forwardkolmogorov}$:
\begin{equation}
    \rho_0(0,q,p)=\pi(q,p), \quad \rho_k(0,q,p)=0,\quad k=1,\cdots,N.
\end{equation}
For other intial condition, we refer to \cite{xie2021frozen} for a broad discussion on proper choices of the initial sampling measure.

\section{Analysis}
\label{sec:analysis}
In this section, we present the numerical analysis of FGS. We focus on the case of Gaussian initial data \cref{eq:initialgaussian}. 
\subsection{Main result}
After sampling $M$ wave packets to approximate $U_{\operatorname{FG}}$, the sampling error should be evaluated using the following weighted $L^2$ error:

\begin{equation}
  \mathbf{e}_N \left(\left\{\widetilde{z}_t^{(j)}\right\}_{j=1}^M\right) = \left\|R_N\left(\frac{1}{M}\sum_{j=1}^M\Lambda^{(j)}\left(t,x;\widetilde{z}_t^{(j)}\right)-{\mathbf{U}}_{\text{FG}}(t,x)\right)\right\|_{L^2\left(\mathbb{R}^m,\mathbb{C}^{N+1}\right)},
  \label{eq:weightederror}
\end{equation}
where $R_N$ is the following $(N+1)\times (N+1)$ weight matrix:
\begin{equation}
  R_N=\operatorname{diag}(1,h,h,\cdots,h).
\end{equation}
We adopt the weighted $L^2$ error \cref{eq:weightederror} due to the fact that the $(N+1)$-level Schr\"{o}dinger equation system is obtained by discretizing the continuum spectrum. In other words, the $h$ in the weighted matrix should be understood as a numerical integration along the energy spectrum $[\mathcal{E}_a,\mathcal{E}_b]$.

Since FGS is a stochastic method, its numerical error should be evaluated as the following mean square error
\begin{equation}
\begin{aligned}
\mathrm{Err}_M(t,N)=& \left[\mathbb{E}_{\left\{\widetilde{z}_t^{(j)}\right\}_{j=1}^M}\left(\left(\mathbf{e}_N \left(\left\{\widetilde{z}_t^{(j)}\right\}_{j=1}^M\right)\right)^2\right)\right]^{\frac{1}{2}},
\end{aligned}
\end{equation}
where $L^2\left(\mathbb{R}^m,\mathbb{C}^{N+1}\right)$ is the $(N+1)$-dimensional-valued $L^2$-space on $\mathbb{R}^m$.
In fact, since all $z_0^{(j)}$ are sampled independently, we have
\begin{equation}
\begin{aligned}
{\mathrm{Err}}_M(t,N)  
 & = \frac{1}{\sqrt{M}} {\mathrm{Err}}_1(t,N).
 \label{eq:halforder}
\end{aligned}
\end{equation}
 To proceed with the error estimate, we make the following assumption on the potential functions: 
\begin{assumption}
 Each energy surface $U_k(q) \in C^{\infty}\left(\mathbb{R}^{m}\right), k \in\{0,1\}$ and satisfies the following subquadratic condition:
 \begin{equation}
\sup _{q \in \mathbb{R}^{m}}\left|\partial_{\alpha} U_{k}(q)\right| \leq C_{E}, \quad \forall|\alpha|=2.
\end{equation}
If $U_k(q)$ depends on $\varepsilon$, we require the inequality to hold uniformly for all $\varepsilon$.
\label{asption}
\end{assumption}
Our main numerical analysis result is that:
\begin{theorem}
For system \cref{eq:discretize_u} 
with  initial value \cref{eq:initialgaussian},  let $M$ be the sample size. With  \cref{asption}, the sampling error ${\mathrm{Err}}_M(t,N)$ of the FGS algorithm (\cref{al:mainalgorithm}) satisfies the following inequality:
\begin{equation}
{\mathrm{Err}}_M(t,N)\leq \frac{1}{\sqrt{M}}
K(t) \prod_{j=1}^{m}\left(\frac{1+a_{j}}{\sqrt{a_{j}}}\right)^{\frac{1}{2}},
\end{equation}
where $K(t)$ is independent of both the semiclassical parameter $\varepsilon$ and the number of metal orbitals $N$ .
\label{thm:main}
\end{theorem}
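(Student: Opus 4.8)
The plan is to reduce the mean-square error to a second-moment estimate for one sampled wave packet, and then to bound that moment uniformly, using two structural features of the construction: the particular initial sampling density $\pi$ trivializes all dependence on the initial point $z_0$, and the discretization weight $h=(\mathcal E_b-\mathcal E_a)/N$ turns every sum over the $N$ metal orbitals into a Riemann sum bounded uniformly in $N$. By \eqref{eq:halforder} it suffices to bound $\mathrm{Err}_1(t,N)$, and since $\mathbf{U}_{\text{FG}}=\mathbb E_{\widetilde z_t}[\Lambda]$ we have, bounding the variance by the second moment, $\mathrm{Err}_1(t,N)^2=\mathbb E_{\widetilde z_t}\big[\|R_N(\Lambda-\mathbb E\Lambda)\|_{L^2}^2\big]\le\mathbb E_{\widetilde z_t}\big[\|R_N\Lambda\|_{L^2(\mathbb R^m,\mathbb C^{N+1})}^2\big]$. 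From \eqref{eq:wavepacket}--\eqref{eq:wavepacketscalar}, $\Lambda=W\,\mathbb V^{(\nu)}$ with $\mathbb V^{(\nu)}$ supported on a single index, and $R_N\mathbb V^{(\nu)}$ is a standard basis vector in every case --- for odd $\nu$ the weight $h$ in $R_N$ exactly cancels the $1/h$ built into $\mathbb V^{(\nu)}$ --- so $\|R_N\Lambda\|_{L^2(\mathbb R^m,\mathbb C^{N+1})}=\|W\|_{L^2(\mathbb R^m)}$. Using $\operatorname{Im}\Theta^{(\nu)}=\tfrac12|x-Q^{(\nu)}|^2$ and a Gaussian integral over $x$, $\|W(t,\cdot;\widetilde z_t)\|_{L^2(\mathbb R^m)}^2=\frac{(\pi\varepsilon)^{m/2}}{(2\pi\varepsilon)^{3m}}\,\pi(z_0)^{-2}\,|A^{(\nu)}|^2\,|\Gamma^{(\nu)}|^2$.

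Next I would strip off the dependence on $z_0$ and $\varepsilon$. The amplitude equation in \eqref{eq:groupdynamics} is linear in $A^{(\nu)}$, so $A^{(\nu)}=A_0^{(0)}(0,z_0)\,G^{(\nu)}$, where $G^{(\nu)}=\exp\!\big(\tfrac12\int_0^t\operatorname{tr}\big((Z^{(\nu)})^{-1}(\partial_z P^{(\nu)}-\mathrm i\,\partial_z Q^{(\nu)}\nabla_Q^2\widetilde U)\big)\,\mathrm ds\big)$ is the usual FGA amplitude-growth factor along the piecewise-smooth trajectory. Because $\pi(z_0)=|A_0^{(0)}(0,z_0)|\big/\!\int_{\mathbb R^{2m}}|A_0^{(0)}(0,z)|\,\mathrm dz$, the ratio $|A_0^{(0)}(0,z_0)|/\pi(z_0)\equiv C_A:=\int_{\mathbb R^{2m}}|A_0^{(0)}(0,z)|\,\mathrm dz$ is constant, hence $\mathbb E_{\widetilde z_t}\big[\pi(z_0)^{-2}|A^{(\nu)}|^2|\Gamma^{(\nu)}|^2\big]=C_A^2\,\mathbb E_{\widetilde z_t}\big[|G^{(\nu)}|^2|\Gamma^{(\nu)}|^2\big]$. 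For the Gaussian data \eqref{eq:initialgaussian}, $|A_0^{(0)}(0,\cdot)|$ is itself Gaussian, $C_A$ is explicit, and one computes $\frac{(\pi\varepsilon)^{m/2}}{(2\pi\varepsilon)^{3m}}C_A^2=2^m\prod_{j=1}^m\frac{1+a_j}{\sqrt{a_j}}$, so all powers of $\varepsilon$ cancel exactly. Therefore $\mathrm{Err}_1(t,N)^2\le 2^m\prod_{j=1}^m\frac{1+a_j}{\sqrt{a_j}}\cdot\mathbb E_{\widetilde z_t}\big[|G^{(\nu)}|^2|\Gamma^{(\nu)}|^2\big]$, and the theorem follows once this expectation is bounded by a quantity depending only on $t$.

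To control $\mathbb E_{\widetilde z_t}\big[|G^{(\nu)}|^2|\Gamma^{(\nu)}|^2\big]$ I would argue as follows. By \cref{asption} $U_0$ and $U_1$ are smooth and subquadratic, so the classical FGA a priori estimates \cite{swart2009mathematical,lu2010frozen} keep $Z^{(\nu)}$ invertible with $\|(Z^{(\nu)})^{-1}\|$, $\|\partial_z Q^{(\nu)}\|$, $\|\partial_z P^{(\nu)}\|$ bounded uniformly; since $Q^{(\nu)},P^{(\nu)}$ and their $z$-derivatives are continuous across hops (by \eqref{eq:initialQPSA2}) and the total time is $t$, these bounds hold leg by leg irrespective of the number of hops, giving $|G^{(\nu)}(t)|\le e^{C_G(t)}$. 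For the factor $|\Gamma^{(\nu)}|^2$, I would expand the expectation via the jump-process decomposition behind \eqref{eq:ueven}--\eqref{eq:uodd}, $\mathbb E_{\widetilde z_t}[\,\cdot\,]=\mathbb E_{z_0\sim\pi}\big[\sum_{n}\sum_{k_{n:1}}\int_{0<t_1<\dots<t_{2n}<t}\rho^{(2n)}_{0,k_{n:1}}(\cdot)\,\mathrm dT+(\text{odd terms})\big]$. In the product $\rho^{(\nu)}|\Gamma^{(\nu)}|^2$ the survival exponentials $\exp(-\!\int\widetilde\lambda)$ and $\exp(-\!\int|\tau|)$ carried by $\rho^{(\nu)}$ are cancelled by the compensating exponentials in $\Gamma^{(\nu)}$ (compare \eqref{eq:Gammaeven}--\eqref{eq:Gammaodd}), the phase factors $\tau/|\tau|$ in $\Gamma^{(\nu)}$ have modulus one, and the exponents that remain are positive but supported on pairwise-disjoint subintervals of $[0,t]$, so they contribute at most $e^{C't}$ once one uses $\widetilde\lambda=h\sum_{k=1}^N|V(\mathcal E_k,\cdot)|\le(\mathcal E_b-\mathcal E_a)C_V$ and $|\tau|\le C_V$ with $C_V:=\sup_{\mathcal E,x}|V(\mathcal E,x)|$, both uniform in $N$. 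What is left of $\rho^{(2n)}$ is $\prod_{j=1}^n h|\tau^{(2j-1)}_{k_j}|\,|\tau^{(2j)}_0|$, and here the decisive uniformity enters: because $\nabla_Q(U_1+\mathcal E_k)=\nabla_Q U_1$, the nuclear trajectory on each metal leg does not depend on which orbital was visited, so each summation $\sum_{k_j=1}^N h|\overline V(\mathcal E_{k_j},\cdot)|\,|V(\mathcal E_{k_j},\cdot)|\le(\mathcal E_b-\mathcal E_a)C_V^2=:C_1$ is a Riemann sum bounded uniformly in $N$, whence $\sum_{k_{n:1}}\prod_{j=1}^n h|\tau^{(2j-1)}_{k_j}|\,|\tau^{(2j)}_0|\le C_1^n$. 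The ordered-time integral contributes $t^{2n}/(2n)!$, so the $n$-series is dominated by $e^{C't}\sum_{n\ge0}C_1^n t^{2n}/(2n)!=e^{C't}\cosh(\sqrt{C_1}\,t)<\infty$, uniformly in $N$ and $\varepsilon$; the odd-$\nu$ terms are estimated in the same way. Combining everything with \eqref{eq:halforder} gives $\mathrm{Err}_M(t,N)\le\frac{1}{\sqrt M}K(t)\prod_{j=1}^m\big(\tfrac{1+a_j}{\sqrt{a_j}}\big)^{1/2}$ with $K(t)$ independent of $\varepsilon$ and $N$.

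I expect the last step to be the main obstacle: one must check that the seemingly dangerous positive exponentials obtained by dividing out the survival probabilities in $\Gamma^{(\nu)}$ genuinely recombine into factors controlled by $e^{C't}$ uniformly in the random number of hops --- which is exactly what the disjointness of the time subintervals of $[0,t]$ supplies --- while simultaneously ensuring that every sum over the $N$ orbitals is a Riemann sum in which $h$ absorbs the growth in $N$ (which hinges on the $\mathcal E_k$-independence of the nuclear dynamics on the metal surfaces). By contrast, the uniform bound on $G^{(\nu)}$ is standard FGA material once the leg-by-leg continuity of $(\partial_z Q^{(\nu)},\partial_z P^{(\nu)},Z^{(\nu)})$ across hops is recorded.
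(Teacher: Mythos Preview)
Your proposal is correct and follows the same architecture as the paper: reduce to the one-sample second moment via \eqref{eq:halforder} and \eqref{eq:err1}, observe $\|R_N\Lambda\|_{L^2}=\|W\|_{L^2}$, control the amplitude growth factor by the subquadratic Gronwall estimate (the paper's \cref{lemma:growth}, resting on \cref{lemma:Hamiltonianflow}), and exploit $\pi\propto|A_0^{(0)}(0,\cdot)|$ together with the explicit Gaussian integral of $|A_0^{(0)}(0,\cdot)|$ to cancel all powers of $\varepsilon$. The one place you take a detour is the handling of $\Gamma^{(\nu)}$: the paper records (\cref{lemma:gamma}) the \emph{pointwise} bound $|\Gamma^{(\nu)}|\le e^{CM_0 t}$ valid for every realization of the jump process --- by exactly the disjointness-of-intervals argument you already identify --- and once that is in hand, $\mathbb E_{\widetilde z_t}[|G^{(\nu)}|^2|\Gamma^{(\nu)}|^2]\le e^{2C_G(t)}e^{2CM_0t}$ is immediate because the jump law has total mass one. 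Your re-expansion of the expectation as a sum over hop configurations, the Riemann-sum control of the $k_j$-summations, and the $\cosh(\sqrt{C_1}\,t)$ resummation are all correct (and the observation that the metal-leg trajectory is $k$-independent is precisely what makes everything uniform in $N$), but they are unnecessary: with the pointwise bound on $|\Gamma^{(\nu)}|$ those steps collapse, which is how the paper keeps the proof to a few lines.
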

This theorem indicates that the sample size required by the FGS algorithm does not increase  when $N$ gets larger.
In fact, recall \cref{eq:wavepacketscalar,eq:wavepacket}, we observe that
\begin{equation}
\begin{aligned}
 \mathrm{Err}_1(t,N)&= \left(\mathbb{E}_{\widetilde{z}_t}\left(\left\|R_N\left(\Lambda(t,x;{\widetilde{z}_t})-{\mathbf{U}}_{\text{FG}}(t,x)\right)\right\|^2_{L^2\left(\mathbb{R}^m,\mathbb{C}^{N+1}\right)}\right)\right)^{\frac{1}{2}} \\
 &\leq  \left(\mathbb{E}_{\widetilde{z}_t}\left(\left\|R_N\left(\Lambda(t,x;{\widetilde{z}_t})\right)\right\|^2_{L^2\left(\mathbb{R}^m,\mathbb{C}^{N+1}\right)}\right)\right)^{\frac{1}{2}} \\
 & =\left(\mathbb{E}_{\widetilde{z}_t}\left(\left\|W(t,x;{\widetilde{z}_t})\right\|^2_{L^2\left(\mathbb{R}^m\right)}\right)\right)^{\frac{1}{2}}.
\end{aligned}
\label{eq:err1}
\end{equation}
Here this inequality uses the fact that the variance of a stochastic variable is not bigger than its secondary moment.
The value of 
$\left\|W(t,x;{\widetilde{z}_t})\right\|^2_{L^2\left(\mathbb{R}^m\right)}$ 
is a function of the stochastic trajectory $\widetilde{z}_t$. Let us introduce the following notation:
\begin{equation}
    f({\widetilde{z}_t})= \left\|W(t,x,z_0 )\right\|^2_{L^2\left(\mathbb{R}^m\right)}.
\end{equation}
Then along with \cref{eq:halforder}, it suffices to only prove the following result:
\begin{theorem}
With the same assumptions in \cref{thm:main}, we have the following inequality:
\begin{equation}
{\mathbb{E}_{\widetilde{z}_t}\left(f(\widetilde{z}_t)\right)}\leq 
(K(t))^2 \prod_{j=1}^{m}\left(\frac{1+a_{j}}{\sqrt{a_{j}}}\right),
\end{equation}
where $K(t)$ is independent of both the semiclassical parameter $\varepsilon$ and the number of metal orbitals $N$.
\label{thm:theotherthm}
\end{theorem}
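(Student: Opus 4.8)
\textbf{Proof proposal for Theorem~\ref{thm:theotherthm}.}

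The plan is to bound $\mathbb{E}_{\widetilde z_t}\bigl(f(\widetilde z_t)\bigr)$ by decomposing the expectation over the jump process into a sum over the number of jumps $\nu$ and the visited orbital indices, and then controlling each term uniformly in $N$ and $\varepsilon$. First I would write out $f(\widetilde z_t)=\|W(t,x,z_0)\|_{L^2(\mathbb{R}^m)}^2$ explicitly using \cref{eq:wavepacketscalar}, so that $f$ is a product of $(2\pi\varepsilon)^{-3m}\pi(z_0)^{-2}$, the squared amplitude $|A^{(\nu)}_{k,k_{[\nu/2]:1}}|^2$, the Gaussian factor $\exp(-\tfrac{2}{\varepsilon}\,\mathrm{Im}\,\Theta^{(\nu)})$ coming from $|\exp(\tfrac{\mathrm i}{\varepsilon}\Theta)|^2 = \exp(-\tfrac{1}{\varepsilon}|x-Q|^2)$, and the squared reweighting factor $|\Gamma^{(\nu)}_{k,k_{[\nu/2]:1}}|^2$. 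The key simplification is that $|\exp(\tfrac{\mathrm i}{\varepsilon}\Theta)|^2$ integrated over $x$ gives $(\pi\varepsilon)^{m/2}$ up to the modulus of $\exp(\tfrac{\mathrm i}{\varepsilon}S)$ which is $1$ (recall $S$ is real), so the $x$-integration is elementary and produces a harmless $\varepsilon$-power that cancels against part of the $(2\pi\varepsilon)^{-3m}$ prefactor.

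Next I would take the expectation over the jump process: by the law of total expectation, $\mathbb{E}_{\widetilde z_t}(f) = \mathbb{E}_{z_0}\sum_{\nu\ge 0}\sum_{\text{paths of length }\nu}\mathbb{P}(\text{that path})\cdot f$, where the path probability is exactly the $\rho^{(\nu)}$ density built from \cref{eq:ueven,eq:uodd}. The crucial cancellation is that the reweighting factor $\Gamma^{(\nu)}$ was \emph{designed} so that $|\Gamma^{(\nu)}|^2 \cdot \rho^{(\nu)}$ collapses: the exponential survival factors $\exp(-\int \widetilde\lambda)$ in $\rho^{(\nu)}$ cancel against the $\exp(+\int\widetilde\lambda)$ in $\Gamma^{(\nu)}$, the $h|\tau|$ jump weights in $\rho^{(\nu)}$ cancel against the $1/|\tau|$ in $\Gamma^{(\nu)}$ leaving just the phase $\tau/|\tau|$ of modulus one, and the measure on jump times becomes the flat Lebesgue measure. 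Hence $\mathbb{E}_{z_0}\sum_\nu \sum_{\text{paths}}\mathbb{P}\cdot f$ reduces, after the $x$-integration, to
\begin{equation}
\mathbb{E}_{\widetilde z_t}(f) = C\varepsilon^{-m}\,\mathbb{E}_{z_0}\Bigl[\pi(z_0)^{-2}\sum_{n\ge 0}\!\!\int\limits_{0<t_1<\cdots}\!\! \bigl(\textstyle\prod h|\tau|^2 \text{-type terms}\bigr)\,\bigl|A^{(\nu)}\bigr|^2\,\mathrm d t_\nu\cdots\mathrm d t_1\Bigr],
\end{equation}
schematically; the sum over the $N$ orbital indices $k_1,\dots,k_n$ combines with the $h^n$ factors to form Riemann sums $h\sum_{k}|V(\mathcal{E}_k,\cdot)|^2 \to \int_{\mathcal{E}_a}^{\mathcal{E}_b}|V(\mathcal{E},\cdot)|^2\,\mathrm d\mathcal{E}$, which is $O(1)$ and $N$-independent under \cref{asption} (assuming $V$ bounded). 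Then I would bound $|A^{(\nu)}|^2$: by the amplitude ODE in \cref{eq:groupdynamics} and the subquadratic \cref{asption}, $Z^{(\nu)}$ stays invertible with $\|(Z^{(\nu)})^{-1}\|$ and $\|\partial_z Q^{(\nu)}\|,\|\partial_z P^{(\nu)}\|$ growing at most exponentially in $t$ (a standard Gronwall estimate on the Hamiltonian flow and its variational equation, uniform in the piecewise-constant potential switches), so $|A^{(\nu)}(t)|\le |A_0^{(0)}(0,z_0)|\,e^{C_A t}$ with $C_A$ depending only on $C_E$ and $t$, not on $\nu$, $N$, or $\varepsilon$. Crucially, since $\widetilde U$ always has the same Hessian bound $C_E$ regardless of which surface the trajectory is on, the exponential growth rate of $A$ does not accumulate extra factors at each hop.

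The remaining work is to sum the series in $n$. After the bounds above, the $n$-th term is controlled by $\frac{(Ct)^{2n}}{(n!)^2}\cdot\!\bigl(\sup_q h\!\sum_k|V(\mathcal{E}_k,q)|\bigr)^{2n}\!\cdot e^{2C_A t}|A_0^{(0)}(0,z_0)|^2$ or similar — the double time-ordered integral over the simplex $\{0<t_1<\cdots<t_\nu<t\}$ contributes $t^\nu/\nu!$, and the square from $f$ together with the Cauchy–Schwarz handling of the cross terms in $|\Gamma|^2$ gives $(n!)^{-2}$-type decay times a geometric factor; in any case the series converges to something of the form $e^{Lt}$ with $L=L(C_E,\sup|V|,\mathcal{E}_b-\mathcal{E}_a)$ independent of $N,\varepsilon$. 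Finally I would carry out the $z_0$-integration: $\mathbb{E}_{z_0}[\pi(z_0)^{-2}|A_0^{(0)}(0,z_0)|^2] = \int \pi(z_0)^{-1}|A_0^{(0)}(0,z_0)|^2\,\mathrm d z_0 = \bigl(\int |A_0^{(0)}(0,q,p)|\,\mathrm d q\,\mathrm d p\bigr)\cdot\!\int\! |A_0^{(0)}(0,q,p)|\,\mathrm d q\,\mathrm d p$ by the definition of $\pi$, and the explicit Gaussian formula for $A_0^{(0)}$ gives $\int|A_0^{(0)}|\,\mathrm d q\,\mathrm d p = c_m\prod_j\frac{1+a_j}{\sqrt{a_j}}\cdot(\text{powers of }\varepsilon)$ which, combined with the $\varepsilon^{-m}$ prefactor and the $(\pi\varepsilon)^{m/2}$ from the $x$-integral, produces exactly the stated $\prod_j\bigl(\tfrac{1+a_j}{\sqrt{a_j}}\bigr)$ with all $\varepsilon$-powers cancelling. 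The main obstacle I anticipate is the bookkeeping in the second step: verifying that $|\Gamma^{(\nu)}|^2\rho^{(\nu)}$ really does telescope cleanly — in particular that the phase-only residue $\prod \tau/|\tau|$ and the mismatch between the $2n$ time integrals and the $(2n+1)$-fold survival products do not leave behind an $N$-dependent or $\varepsilon$-dependent factor — and then showing the resulting double series in $n$ converges with an $N$-uniform radius, which is where the Riemann-sum structure $h\sum_k|V(\mathcal{E}_k,\cdot)|\to\int|V|\,\mathrm d\mathcal{E}$ must be invoked carefully and uniformly in $q$ along the trajectory.
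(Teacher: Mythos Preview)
Your approach can be made to work but takes a substantially longer route than the paper's, and your central cancellation claim is not quite right. The paper never unfolds the expectation over jump paths. Instead it proves two \emph{pointwise} bounds valid for every realization of the process (every $\nu$, every index sequence, every jump-time sequence, every $z_0$): Lemma~\ref{lemma:gamma} gives $\bigl|\Gamma^{(\nu)}_{k,k_{[\nu/2]:1}}\bigr| \le e^{CM_0 t}$ directly from the formulas \cref{eq:Gammaeven,eq:Gammaodd} (the phases have modulus one and the exponents of the survival factors sum telescopically to at most $CM_0 t$), and Lemma~\ref{lemma:growth} gives $\bigl|A^{(\nu)}\bigr| \le G(t)\,\bigl|A_0^{(0)}(0,z_0)\bigr|$ by exactly the Gronwall argument you sketch. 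Since the conditional expectation $\mathbb{E}_{\widetilde z_t\mid z_0}$ is with respect to a \emph{probability} measure, one gets immediately $\mathbb{E}_{\widetilde z_t\mid z_0}\bigl(|A^{(\nu)}|^2|\Gamma^{(\nu)}|^2\bigr) \le G(t)^2 e^{2CM_0 t}\bigl|A_0^{(0)}(0,z_0)\bigr|^2$, and then the $z_0$-integration closes exactly as in your last paragraph. No series summation, no Riemann-sum argument in $k$, no Cauchy--Schwarz step is needed.

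The specific inaccuracy in your plan is the assertion that $|\Gamma^{(\nu)}|^2\rho^{(\nu)}$ collapses to flat Lebesgue measure on the jump-time simplex. What actually collapses is $|\Gamma^{(\nu)}|\rho^{(\nu)}$ (one power), yielding $\prod_{j} h\,|\tau^{(2j-1)}||\tau^{(2j)}|$; the second factor $|\Gamma^{(\nu)}|$ survives and must be bounded separately --- which is precisely Lemma~\ref{lemma:gamma}. Once you have that lemma in hand, however, you already have a uniform pointwise bound on $|A|^2|\Gamma|^2$, at which point the entire unfolding into $\sum_\nu\sum_{k_{n:1}}\int_{\text{simplex}}$ becomes redundant. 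Your downstream estimates (simplex volume $t^{2n}/(2n)!$ rather than $t^{2n}/(n!)^2$, and no ``cross terms'' in $|\Gamma|^2$ to handle) would also simplify, but the cleanest fix is simply to drop the unfolding step and argue pointwise as the paper does.
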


\subsection{Proof of \cref{thm:theotherthm}}
We first introduce a lemma that gives an upper bound for $\left| \Gamma_{k,k_{[\nu/2]:1}}^{(\nu)}(t,T_{\nu:1})\right|$:
\begin{lemma}[Upper bound for $\Gamma_{k,k_{[\nu/2]:1}}^{(\nu)}(t,T_{\nu:1},z_0)$]
Let $C=\max(1,\mathcal{E}_b-\mathcal{E}_a)$, $M_0=\max_{\mathcal{E}\in [\mathcal{E}_a,\mathcal{E}_b]}\left|V(\mathcal{E})\right|$.
we have
\begin{equation}
    \left|\Gamma_{k,k_{[\nu/2]:1}}^{(\nu)}(t,T_{\nu:1},z_0)\right|\leq \exp({CM_0t}),
\end{equation}
for any $\nu\geq 0$, any $k,k_{[\nu/2]:1}\in\{0,1,\cdots,N\}$, any $z_0\in\mathbb{R}^{2m}$ and any $0<t_1<\cdots<t_{\nu}<t$.
\label{lemma:gamma}
\end{lemma}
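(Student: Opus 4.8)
The plan is to prove \cref{lemma:gamma} by a direct estimate on the explicit formulas \cref{eq:Gammaeven} and \cref{eq:Gammaodd}. Each of these is a finite product of two kinds of factors: \emph{phase} factors of the form $\tau/|\tau|$ (and products of such, divided by the matching products of moduli), which have modulus exactly one, and \emph{growth} factors of the form $\exp\!\big(\int_I \mu(s)\,\mathrm ds\big)$ over various time subintervals $I\subset[0,t]$, where $\mu$ is either $\widetilde\lambda^{(2n+1)}_{k_{n:1}}$, or some $|\tau^{(\nu)}_{\cdot}|$, or the complex rate $\tau^{(2n+2)}_{0,\cdot}$ appearing in the odd case. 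First I would take the modulus of the whole product, discard the modulus-one factors, and use $\big|\exp(\int_I \mu)\big|=\exp(\int_I \mathrm{Re}\,\mu)\le \exp(\int_I |\mu|)$ to reduce the bound to a product of terms $\exp\!\big(\int_I |\mu(s)|\,\mathrm ds\big)$.

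The second step is to bound each integrand uniformly. By \cref{eq:hoppingcoefficient} every $\tau^{(\nu)}_{\cdot}$ has the form $-\mathrm i\,V(\mathcal E_\cdot, Q^{(\nu-1)}_\cdot)$, hence $|\tau^{(\nu)}_\cdot| = |V(\mathcal E_\cdot, Q^{(\nu-1)}_\cdot)| \le M_0$ uniformly in $z_0$ and in the jump times. The only place where $N$ could enter is through $\widetilde\lambda^{(2n+1)}_{k_{n:1}} = \sum_{k=1}^N h\,|\tau^{(2n+1)}_{k,k_{n:1}}|$; but since $Nh = \mathcal E_b - \mathcal E_a$, we get $\widetilde\lambda^{(2n+1)}_{k_{n:1}} \le Nh\,M_0 = (\mathcal E_b - \mathcal E_a)\,M_0 \le C M_0$ with $C = \max(1, \mathcal E_b - \mathcal E_a)$, and this bound is \emph{independent of $N$}. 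This Riemann-sum structure is exactly what makes the estimate $N$-robust. Thus every integrand $|\mu(s)|$ appearing above is $\le C M_0$.

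The third step is bookkeeping on the time intervals. In \cref{eq:Gammaeven} the first product contributes the intervals $[t_{2j}, t_{2j+1}]$ ($j=0,\dots,n$) and the second the intervals $[t_{2j-1}, t_{2j}]$ ($j=1,\dots,n$); in \cref{eq:Gammaodd} there is in addition the piece $[t_{2n+1}, t]$. In both cases the collection is precisely $[0,t_1],[t_1,t_2],\dots,[t_{2n},t]$, a partition of $[0,t]$, so the lengths sum to $t$. Combining with the uniform bound $C M_0$ on the integrands, the product of growth factors is $\le \exp(C M_0 t)$, which is the claimed estimate, and every inequality used was uniform in $\nu$, in the indices $k,k_{[\nu/2]:1}$, in $z_0$, and in the ordered times $0<t_1<\dots<t_\nu<t$.

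I expect the proof to be essentially routine; the only points requiring care are (i) checking that the collection of integration intervals tiles $[0,t]$ exactly, with no overlap or gap, and (ii) the degenerate case in which some $\tau^{(\nu)}_\cdot$ vanishes, which is handled either by a limiting argument or by recalling that the denominators in $\Gamma$ cancel against matching factors in the density $\rho$, so that only products of $\Gamma$ against those density weights are ever used. The conceptually important — though technically trivial — ingredient is the identity $Nh = \mathcal E_b-\mathcal E_a$, which is what prevents the bound from deteriorating as the metal continuum is refined.
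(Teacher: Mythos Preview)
Your proposal is correct and follows essentially the same approach as the paper's own proof: bound $|\tau^{(\nu)}_\cdot|\le M_0$ and $\widetilde\lambda^{(2n+1)}_{k_{n:1}}\le(\mathcal E_b-\mathcal E_a)M_0\le CM_0$, then observe that the exponential factors in \cref{eq:Gammaeven,eq:Gammaodd} cover a partition of $[0,t]$ so that the product is $\le\exp(CM_0 t)$. Your write-up is in fact slightly more careful than the paper's --- you explicitly note the modulus-one phase factors and the use of $|\exp(\int\mu)|\le\exp(\int|\mu|)$ for the complex-valued integrand in the odd case --- but the argument is the same.
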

\begin{proof}
From \cref{eq:hoppingcoefficient}, we can see that   $\left|\tau_{k,k_{[\nu/2]:1}}^{(\nu)}\right| \leq M_0$
and according to \cref{eq:lambda}, $\left|\widetilde\lambda_{k_{n:1}}^{(2 n+1)}\left(t, T_{2 n: 1}, z_{0}\right)\right|\leq (\mathcal E_b-\mathcal E_a)M_0$. Therefore, $\left|\tau_{k,k_{[\nu/2]:1}}^{(\nu)}\right| \leq CM_0$, $\left|\widetilde\lambda_{k_{n:1}}^{(2 n+1)}\right|\leq CM_0$. For $\nu$ being odd, according to \cref{eq:Gammaodd}, we have
\begin{equation}
  \begin{aligned}
 &\left|\Gamma_{k,k_{n:1}}^{(2n+1)}(t,T_{2n+1:1},z_0)\right|=\prod_{j=0}^{n}\left(\exp \left(\int_{t_{2 j}}^{t_{2 j+1}}  \widetilde{\lambda}_{k_{n:1}}^{(2 n+1)}\left(s_{2 j+1}, T_{2 n: 1}, z_{0}\right)\mathrm d s_{2 j+1}\right)\right)\\
   &\times\exp \left(\int_{t_{2n+1}}^{t}  \tau_{0,k_{n+1:1}}^{(2 n+2)}\mathrm d s\right)
  \times\prod_{j=1}^{n}\exp\left({\left(\int_{t_{2 j-1}}^{t_{2 j}}
  \left|\tau_{0,k_{j:1}}^{(2 j)}\left(s_{2 j}, T_{2 j-1:1}\right)\right|\mathrm d s_{2 j}\right)}\right)\\
  &\leq \prod_{j=0}^{n}\exp \left(\int_{t_{2 j}}^{t_{2 j+1}} CM_0\mathrm d s_{2 j+1} \right)\exp \left(\int_{t_{2n+1}}^{t}CM_0 \mathrm d s \right)
  \prod_{j=1}^{n}\exp{\left(\int_{t_{2 j-1}}^{t_{2 j}}
  CM_0\mathrm d s_{2 j}\right)} \\
  & \leq \exp(CM_0t).
\end{aligned}
\label{eq:gammabound}
\end{equation}
For $\nu$ being even, based on \cref{eq:Gammaeven}, we can prove the same result in a similar way as in \cref{eq:gammabound}.
\end{proof}

Using \cref{lemma:gamma}, we can give an upper bound of $\mathbb{E}_{\widetilde{z}_t}\left(f(\widetilde{z}_t)\right)$:
\begin{lemma}
With the same assumptions in \cref{thm:main}, we have the following inequality:
\begin{equation}
\mathbb{E}_{\widetilde{z}_t}\left(f(\widetilde{z}_t)\right)\leq 
\frac{\exp(2CM_0t)}{(2 \pi \varepsilon)^{\frac{5}{2} m}} \int \mathrm d z_{0} \frac{1}{\pi({z_0})} \mathbb{E}_{\widetilde{z}_{t}| z_{0}}
\left(\left|A_{k,k_{[{\nu}/{2}]:1}}^{(\nu)} \left(t ,T_{\nu:1}, z_{0}\right)\right|^{2}\right),
\end{equation}
where $C,M_0$ is defined the same way as in \cref{lemma:gamma}.
\label{lemma:temporary}
\end{lemma}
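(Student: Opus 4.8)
The plan is to substitute the explicit Frozen Gaussian form of $W$ given in \cref{eq:wavepacketscalar} into $f(\widetilde z_t)=\|W(t,x,z_0)\|_{L^2(\mathbb R^m)}^2$, perform the $x$-integration in closed form, bound the hopping factor by \cref{lemma:gamma}, and finally take the expectation by conditioning on the initial phase-space point $z_0$.

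First I would observe that along every realization of the stochastic trajectory the functions $Q_{k,k_{[\nu/2]:1}}^{(\nu)}$, $P_{k,k_{[\nu/2]:1}}^{(\nu)}$ and $S_{k,k_{[\nu/2]:1}}^{(\nu)}$ are real-valued: the ODEs \cref{eq:groupdynamics} governing them have real coefficients (the potential $\widetilde U$ is real), real initial data \cref{eq:initialQPS}, and the matching conditions \cref{eq:initialQPSA2} propagate realness across every hop. Consequently, from the phase structure \cref{eq:Thetarelation},
\[
\left|\exp\!\left(\frac{\mathrm{i}}{\varepsilon}\Theta_{k,k_{[\nu/2]:1}}^{(\nu)}(t,T_{\nu:1},z_0,x)\right)\right|
=\exp\!\left(-\frac{1}{2\varepsilon}\left|x-Q_{k,k_{[\nu/2]:1}}^{(\nu)}(t,T_{\nu:1},z_0)\right|^2\right),
\]
so the only $x$-dependence of $|W(t,x,z_0)|^2$ is a Gaussian. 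Integrating in $x$ via $\int_{\mathbb R^m}\exp(-|x-Q|^2/\varepsilon)\,\mathrm dx=(\pi\varepsilon)^{m/2}$ then gives
\[
f(\widetilde z_t)=\frac{(\pi\varepsilon)^{m/2}}{(2\pi\varepsilon)^{3m}}\cdot\frac{\left|A_{k,k_{[\nu/2]:1}}^{(\nu)}\right|^2\left|\Gamma_{k,k_{[\nu/2]:1}}^{(\nu)}\right|^2}{\pi(z_0)^2}
\le\frac{1}{(2\pi\varepsilon)^{5m/2}}\cdot\frac{\left|A_{k,k_{[\nu/2]:1}}^{(\nu)}\right|^2\left|\Gamma_{k,k_{[\nu/2]:1}}^{(\nu)}\right|^2}{\pi(z_0)^2},
\]
where the last step uses $(\pi\varepsilon)^{m/2}/(2\pi\varepsilon)^{3m}=1/(2^{3m}(\pi\varepsilon)^{5m/2})\le 1/(2\pi\varepsilon)^{5m/2}$, because $2^{3m}\ge 2^{5m/2}$.

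Next I would apply \cref{lemma:gamma} to bound $\left|\Gamma_{k,k_{[\nu/2]:1}}^{(\nu)}\right|^2$ by the deterministic quantity $\exp(2CM_0t)$, uniformly in $\nu$, the indices, $z_0$ and the hopping times, and then take $\mathbb E_{\widetilde z_t}$. Decomposing the law of the trajectory into the draw $z_0\sim\pi$ followed by the Hamiltonian/jump evolution of \cref{eq:Hanmiltonianflow} and \cref{eq:jumpprob}, and using that $1/\pi(z_0)^2$ depends on $z_0$ only, the tower property gives
\[
\mathbb E_{\widetilde z_t}\!\left(\frac{\left|A_{k,k_{[\nu/2]:1}}^{(\nu)}\right|^2}{\pi(z_0)^2}\right)
=\int\pi(z_0)\,\frac{1}{\pi(z_0)^2}\,\mathbb E_{\widetilde z_t\mid z_0}\!\left(\left|A_{k,k_{[\nu/2]:1}}^{(\nu)}(t,T_{\nu:1},z_0)\right|^2\right)\mathrm dz_0,
\]
and cancelling one factor of $\pi(z_0)$ turns the right-hand side into $\int\frac{1}{\pi(z_0)}\,\mathbb E_{\widetilde z_t\mid z_0}(|A_{k,k_{[\nu/2]:1}}^{(\nu)}|^2)\,\mathrm dz_0$; combined with the two displays above this is exactly the asserted bound. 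I do not expect a genuine obstacle in this step: the only delicate points are verifying that $(Q,P,S)$ stay real along hopping trajectories, so that $|e^{\mathrm{i}\Theta/\varepsilon}|$ is a clean Gaussian, and keeping the powers of $2\pi\varepsilon$ in order. The truly hard part is postponed to the next step, namely estimating $\mathbb E_{\widetilde z_t\mid z_0}(|A_{k,k_{[\nu/2]:1}}^{(\nu)}|^2)$: the amplitude $A$ grows through the $Z^{-1}$ factor in \cref{eq:groupdynamics} and accumulates contributions over arbitrarily many hops, and it is precisely there that independence of the final bound from $N$ and $\varepsilon$ must be extracted.
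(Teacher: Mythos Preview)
Your proposal is correct and follows essentially the same approach as the paper's proof: compute $|W|^2$ from \cref{eq:wavepacketscalar}, integrate the Gaussian in $x$, bound $|\Gamma|^2$ via \cref{lemma:gamma}, and then condition on $z_0$ to convert the expectation into the $\int \frac{1}{\pi(z_0)}\,\mathbb E_{\widetilde z_t\mid z_0}(\cdot)\,\mathrm dz_0$ form. Your treatment is in fact slightly more careful than the paper's on two minor points---you explicitly justify that $(Q,P,S)$ remain real along hopping trajectories, and you correctly evaluate the Gaussian integral as $(\pi\varepsilon)^{m/2}$ and then bound it, whereas the paper writes $(2\pi\varepsilon)^{m/2}$ directly---but these are cosmetic differences within the same argument.
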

\begin{proof}
With \cref{eq:wavepacket} and \cref{lemma:gamma}, we have
\begin{equation}
    \begin{aligned}
     \left|W(t,x,z_0)\right|^2&=\frac{1}{(2\pi\varepsilon)^{3m}}\frac{\left|A_{k,k_{[{\nu}/{2}]:1}}^{(\nu)} \right|^{2}}{\pi(z_0)^2}
     \exp \left(-\frac{\left|x-Q_{k_{[{\nu}/{2}]:1}}^{(\nu)}\right|^{2}}{\varepsilon}\right)\left|\Gamma_{k,k_{[\nu/2]:1}}^{(\nu)}\right|^2 \\
     &\leq \frac{1}{(2\pi\varepsilon)^{3m}}\frac{\left|A_{k,k_{[{\nu}/{2}]:1}}^{(\nu)} \right|^{2}}{\pi(z_0)^2}
     \exp \left(-\frac{\left|x-Q_{k_{[{\nu}/{2}]:1}}^{(\nu)}\right|^{2}}{\varepsilon}\right)\exp(2CM_0t).
    \end{aligned}
\end{equation}
Since $\int_{\mathbb{R}^m}\exp (-{|x-Q_{k_{[{\nu}/{2}]:1}}^{(\nu)}|^{2}}/{\varepsilon})\mathrm d x=(2\pi\varepsilon)^{m/2}$, we have 
\begin{equation}
    \begin{aligned}
    \mathbb{E}_{\widetilde{z}_t}\left(f(\widetilde{z}_t)\right)
   & =
   \mathbb{E}_{\widetilde{z}_t}\left( \int_{\mathbb{R}^m} \left|W(t,x,z_0)\right|^2\mathrm d x\right) \\
   & \leq\frac{\exp(2CM_0t)}{(2\pi\varepsilon)^{5m/2}}  \mathbb{E}_{\widetilde{z}_t}\left(
   \frac{\left|A_{k,k_{[{\nu}/{2}]:1}}^{(\nu)} \right|^{2}}{\pi(z_0)^2}\right)
   \\
   & =\frac{\exp(2CM_0t)}{(2\pi\varepsilon)^{5m/2}} \mathbb{E}_{z_0\sim\pi}\left(\frac{1}{\pi(z_0)^2}\mathbb{E}_{\widetilde{z}_t|z_0}\left(
  {\left|A_{k,k_{[{\nu}/{2}]:1}}^{(\nu)} \right|^{2}}\right)\right) \\
  & = \frac{\exp(2CM_0t)}{(2\pi\varepsilon)^{5m/2}}
  \int\mathrm{d}z_0 \frac{1}{\pi(z_0)}\mathbb{E}_{\widetilde{z}_t|z_0}\left(
  {\left|A_{k,k_{[{\nu}/{2}]:1}}^{(\nu)} \right|^{2}}\right).
    \end{aligned}
\end{equation}
\end{proof}

Now, with the following lemma, we can finally prove \cref{thm:theotherthm}.
\begin{lemma}
Given time $t$, with \cref{asption}, for any $\nu\geq 0$, any $k,k_{[\nu/2]:1}\in\{0,1,\cdots,N\}$, any $z_0\in\mathbb{R}^{2m}$ and  any $0<t_1<\cdots<t_{\nu}<t$, there exist $G(t)$ such that
\begin{equation}
    \frac{\left|A_{k,k_{[{\nu}/{2}]:1}}^{(\nu)}\left(t ,T_{n:1}, z_{0}\right) \right|}{\left|A_0^{(0)}(0,z_0)\right|}\leq G(t),
\end{equation}
where $G(t)$ is independent of $N$ and $\varepsilon$.
\label{lemma:growth}
\end{lemma}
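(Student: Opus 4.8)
The plan is to solve the amplitude ODE from \cref{eq:groupdynamics} explicitly along a fixed stochastic trajectory and then bound the resulting exponent uniformly. First I would observe that the matching conditions \cref{eq:initialQPSA2} make $Q^{(\nu)}$, $P^{(\nu)}$ --- and hence, differentiating the matching in $z_0$, also $\partial_z Q^{(\nu)}$, $\partial_z P^{(\nu)}$ and $Z^{(\nu)}$ --- continuous across every hop, so the amplitude equation may be regarded as a single scalar linear ODE on $[0,t]$ with the piecewise-defined Hessian $\nabla_Q^2\widetilde U(s,Q(s))$ as its coefficient. Integrating it gives
\begin{equation}
\frac{\left|A_{k,k_{[\nu/2]:1}}^{(\nu)}(t,T_{\nu:1},z_0)\right|}{\left|A_0^{(0)}(0,z_0)\right|}
=\exp\left(\frac12\int_0^t\operatorname{Re}\operatorname{tr}\left(Z^{-1}\left(\partial_z P-\mathrm i\,\partial_z Q\,\nabla_Q^2\widetilde U(s,Q)\right)\right)\mathrm d s\right),
\end{equation}
where $Q,P,Z,\partial_z Q,\partial_z P$ denote the corresponding quantities evaluated along the trajectory. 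This ratio is manifestly independent of $\varepsilon$, since $\varepsilon$ enters neither \cref{eq:groupdynamics} nor the Hamiltonian flow \cref{eq:Hanmiltonianflow}; it therefore remains to bound the integrand by a quantity depending only on $t$, the dimension $m$, and the constant $C_E$ from \cref{asption}.

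The key estimate is the uniform bound $\|(Z^{(\nu)})^{-1}\|\le\tfrac12$, which I would obtain from a symplectic conservation law. Writing $\bar Z=\partial_z(Q-\mathrm iP)=\partial_z Q-\mathrm i\partial_z P$ and using the variational equations $\tfrac{\mathrm d}{\mathrm d t}\partial_z Q=\partial_z P$, $\tfrac{\mathrm d}{\mathrm d t}\partial_z P=-\nabla_Q^2\widetilde U(s,Q)\,\partial_z Q$ (valid on each segment, with $\nabla_Q^2\widetilde U$ real symmetric since the constants $\mathcal E_{k_j}$ drop out of the Hessian), a short direct computation gives $\tfrac{\mathrm d}{\mathrm d t}\big(Z^*Z-\bar Z^*\bar Z\big)=0$. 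Because the flow on each segment is symplectic and the matching keeps $Z,\bar Z$ continuous at the hop times, this invariant persists through arbitrarily many hops; with the initial data $Z(0)=2I$, $\bar Z(0)=0$ it yields $Z^*Z=4I+\bar Z^*\bar Z\succeq 4I$, hence $\|(Z^{(\nu)})^{-1}\|\le\tfrac12$ for every $\nu$, every choice of hopping indices and times, and every $z_0$. This is the crucial point that prevents the amplitude growth from compounding over the (unboundedly many) switching segments.

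Next I would control $\|\partial_z Q\|$, $\|\partial_z P\|$ and $\|\nabla_Q^2\widetilde U(s,Q)\|$. The last is bounded by a dimensional constant times $C_E$ directly from \cref{asption}, uniformly in $\varepsilon$ and independently of $N$, since $\nabla_Q^2\widetilde U$ is always $\nabla^2 U_0$ or $\nabla^2 U_1$. For the first two, the variational system above is linear with coefficient matrix bounded in norm by $C_E$, so a Gronwall estimate with the fixed data $\partial_z Q(0)=I$, $\partial_z P(0)=-\mathrm iI$ gives $\|\partial_z Q(s)\|+\|\partial_z P(s)\|\le C_m\,\mathrm e^{C_m(1+C_E)s}$, again uniform in the number and location of hops because the same $C_E$ bounds every active Hessian. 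Combining these three bounds with $\|(Z^{(\nu)})^{-1}\|\le\tfrac12$ controls the integrand by $C_m(1+C_E)\,\mathrm e^{C_m(1+C_E)s}$, and integrating over $[0,t]$ produces an explicit $G(t)$ --- of the form $\exp\!\big(C_m(1+C_E)(\mathrm e^{C_m(1+C_E)t}-1)\big)$, say --- depending only on $t$, $m$ and $C_E$, and in particular independent of $N$ and $\varepsilon$.

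I expect the main obstacle, and the step requiring the most care, to be establishing the conservation law $Z^*Z-\bar Z^*\bar Z=4I$ and arguing that it is not destroyed by the hopping mechanism. One must verify carefully that (i) differentiating the matching conditions \cref{eq:initialQPSA2} in $z_0$ genuinely yields continuity of $\partial_z Q$ and $\partial_z P$, so that no spurious jump in $Z$ or $\bar Z$ occurs at $t_1,\dots,t_\nu$, and (ii) the piecewise Hamiltonian flow is a composition of symplectic maps, so that the quadratic invariant is preserved segment by segment regardless of how the active potential switches between $U_0$ and $U_1+\mathcal E_{k_j}$. Everything downstream is a routine Gronwall estimate with constants depending only on $C_E$ and $m$.
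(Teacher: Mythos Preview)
Your proposal is correct and follows the same high-level strategy as the paper: bound the trace term in the amplitude ODE uniformly in $\nu$, the hopping data, and $z_0$, then apply Gronwall. The paper's own proof is considerably terser: it invokes a quoted lemma (the one stated just before the proof, taken from the earlier FGA literature) asserting that $J^{\kappa_{t,T_{\nu:1}}}$ and $(Z^{\kappa_{t,T_{\nu:1}}})^{-1}$ together with all their $(q,p)$-derivatives are bounded by constants $C_k(t)$ depending only on $U_0,U_1$, and then simply says ``the right-hand side is bounded by some $\gamma(t)$, so Gronwall gives $G(t)=\exp(\gamma(t))$.'' You instead supply the content of that cited lemma directly --- the symplectic invariant $Z^{*}Z-\bar Z^{*}\bar Z=4I$ yielding $\|(Z^{(\nu)})^{-1}\|\le\tfrac12$, and a Gronwall bound on the variational system for $\partial_z Q,\partial_z P$ --- which makes your argument self-contained and produces an explicit $G(t)$. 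The paper's route is shorter because it outsources the hard step; yours is more transparent and shows exactly why the constants are independent of the number of hops. The point you flag as the main obstacle (continuity of $\partial_z Q,\partial_z P$ across hops, so that the invariant survives) is exactly the reason the paper emphasizes that the Hamiltonian flow map $\kappa_{t,T_{\nu:1}}$ is independent of the hopping indices $k_{[\nu/2]:1}$.
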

The proof of \cref{lemma:growth} is described later. Let us first demonstrate how to use \cref{lemma:growth} to prove \cref{thm:theotherthm}.
\begin{proof}[Proof of \cref{thm:theotherthm} using \cref{lemma:growth}]

With \cref{lemma:temporary} and \cref{lemma:growth}, we have:
\begin{equation}
    \begin{aligned}
    \mathbb{E}_{\widetilde{z}_t}\left(f(\widetilde{z}_t)\right)
  & \leq  \frac{\exp(2CM_0t)}{(2\pi\varepsilon)^{5m/2}}
  \int\mathrm{d}z_0 \frac{1}{\pi(z_0)}\mathbb{E}_{\widetilde{z}_t|z_0}\left(
  {\left|A_{k,k_{[{\nu}/{2}]:1}}^{(\nu)} \right|^{2}}\right) \\
  &\leq \frac{\exp(2CM_0t)G(t)^2}{(2\pi\varepsilon)^{5m/2}}
  \int\mathrm{d}z_0 \frac{\left|A_0^{(0)}(0,z_0)\right|^2}{\pi(z_0)}.
    \end{aligned}
\end{equation}
Recall that $\pi\left(z_{0}\right)=\frac{\left|A_0^{(0)}\left(0, z_{0}\right)\right|}{\int\left|A_0^{(0)}\left(0, z_{0}\right)\right| \mathrm d z_{0}}$, let $K(t)=\exp(CM_0t)G(t)$, then  
\begin{equation}
    \begin{aligned}
    \mathbb{E}_{\widetilde{z}_t}\left(f(\widetilde{z}_t)\right)
  &\leq \frac{K(t)^2}{(2\pi\varepsilon)^{5m/2}}
  \int\mathrm{d}z_0 \frac{\left|A_0^{(0)}(0,z_0)\right|^2}{\pi(z_0)}= \frac{\left(\int\left|A_0^{(0)}\left(0, z_{0}\right)\right| \mathrm d z_{0}\right)^{2}}{(2 \pi \varepsilon)^{5 m / 2}}.
    \end{aligned}
\end{equation}
Since
$$
\int_{\mathbb{R}^{2 m}}\left|A_0^{(0)}\left(0, z_{0}\right)\right| \mathrm{d} z_{0}=2^{2 m}(\pi \varepsilon)^{\frac{5 m}{4}} \prod_{j=1}^{m}\left(\frac{1+a_{j}}{\sqrt{a_{j}}}\right)^{\frac{1}{2}},
$$
then we have 
\begin{equation}
\mathbb{E}_{\widetilde{z}_t}\left(f(\widetilde{z}_t)\right)
  \leq
2^{\frac{3 m}{2}}\left(K(t)\right)^{2} \prod_{j=1}^{m}\left(\frac{1+a_{j}}{\sqrt{a_{j}}}\right).
\end{equation}
\end{proof}
Note that using \cref{eq:halforder} and \cref{eq:err1}, \cref{thm:theotherthm} immediately yields \cref{thm:main}.
\subsection{Proof of \cref{lemma:growth}}
Now we only need to provide a proof for \cref{lemma:growth}. 

Note that the dynamics of $Q_{k,k_{[{\nu}/{2}]:1}}^{(\nu)}\left(t ,T_{\nu:1}, z_{0}\right)$
and
$P_{k,k_{[{\nu}/{2}]:1}}^{(\nu)}\left(t ,T_{\nu:1}, z_{0}\right)$
doesn't depend on the value of $k$ and $k_{[{\nu}/{2}]:1}$. This can be seen in \cref{eq:Hanmiltonianflow} where we take advantage of the fact that $U_k(x)$ only differs from each other by a constant for $k=1,\cdots,N$. In other words, the Hamiltonian flow only depends on the hopping time $T_{\nu: 1}=\left\{t_{\nu}, \ldots, t_{1}\right\}$. We denote the map on the phase space from initial time 0 to time $s$ by $\kappa_{s, T_{\nu: 1}}$ ($s\in[0,t]$):
\begin{equation}
\begin{aligned}
\kappa_{s, T_{\nu: 1}}: & \mathbb{R}^{2 m} \rightarrow \mathbb{R}^{2 m} \\
&(q, p) \longmapsto\left(Q^{\kappa_{s, T_{\nu: 1}}}(q, p), P^{\kappa_{s, T_{\nu: 1}}}(q, p)\right),
\end{aligned}
\end{equation}
such that
\begin{equation}
\begin{array}{l}
\left(Q^{\kappa_{s, T_{j: 1}}}(q, p), P^{\kappa_{s, T_{j: 1}}(q, p)}\right) \\
\quad=\left\{\begin{array}{ll}
\left(Q^{(0)}(t, q, p), P^{(0)}(t, q, p)\right), & t \leq t_{1} \\
\left(Q^{(j)}\left(t, T_{j: 1}, q, p\right), P^{(j)}\left(t, T_{i: 1}, q, p\right)\right), & t \in\left[t_{j}, t_{j+1}\right], j \in\{1, \ldots, \nu\} \\
\left(Q^{(\nu)}\left(t, T_{\nu: 1}, q, p\right), P^{(\nu)}\left(t, T_{\nu: 1}, q, p\right)\right), & t \geq t_{\nu}
\end{array}\right.  .
\end{array}
\end{equation}
Here we omit the subscript $k,k_{[{\nu}/{2}]:1}$ for $Q^{(\nu)}$ and $P^{(\nu)}$.

For a transformation $\kappa$ on the phase space $\mathbb{R}^{2m}$, we can define its Jacobian matrix as
\begin{equation}
J^{\kappa}(q, p)=\left(\begin{array}{ll}
\left(\partial_{q} Q^{\kappa}\right)^{T}(q, p) & \left(\partial_{p} Q^{\kappa}\right)^{T}(q, p) \\
\left(\partial_{q} P^{\kappa}\right)^{T}(q, p) & \left(\partial_{p} P^{\kappa}\right)^{T}(q, p)
\end{array}\right) .
\end{equation}
We also define
\begin{equation}
Z^{\kappa}(q, p)=\partial_{z}\left(Q^{\kappa}(q, p)+i P^{\kappa}(q, p)\right),\quad \partial_{z}=\partial_{q}-i \partial_{p} .
\end{equation}
Let us quote the following lemma from \cite{lu2018frozen,lu2010frozen} that states the property of the Hamiltonian flow:
\begin{lemma}
Given $t>0$, with \cref{asption}, for any $\nu$ and any $T_{\nu:1}$, the asscociated map $\kappa_{t, T_{\nu: 1}}$ and $J^{\kappa_{t, T_{\nu: 1}}}(q, p)$, $Z^{\kappa_{t, T_{\nu: 1}}}(q, p)$ has the follow properties:
\begin{enumerate}
    \item $\kappa_{t, T_{\nu: 1}}$ is a canonical transformation, i.e.
    \begin{equation}
\left(J^{\kappa_{t, T_{\nu: 1}}}\right)^{T}\left(\begin{array}{cc}
0 & I_{m} \\
-I_{m} & 0
\end{array}\right) J^{\kappa_{t, T_{\nu: 1}}}=\left(\begin{array}{cc}
0 & I_{m} \\
-I_{m} & 0
\end{array}\right),
\end{equation}
Here $I_{m}$ is the $m \times m$ identity matrix.
\item For any $k \in \mathbb{N}$, there exists a constant $C_{k}$ such that
\begin{equation}
\sup _{(q, p) \in \mathbb{R}^{2m}} \max _{\left|\alpha_{p}\right|+\left|\alpha_{q}\right| \leq k}\left|\partial_{q}^{\alpha_{q}} \partial_{p}^{\alpha_{p}}\left[J^{\kappa_{t, T_{\nu: 1}}}(q, p)\right]\right| \leq C_{k}(t),
\end{equation}
$C_{k}(t)$ is dependent on $U_0$ and $U_1$.
\item $Z^{\kappa_{t, T_{\nu: 1}}}$ is an invertible $m\times m$ matrix, and for any $k \in \mathbb{N}$, there exists a constant $C_{k}$ such that
\begin{equation}
\sup _{(q, p) \in K} \max _{\left|\alpha_{p}\right|+\left|\alpha_{q}\right| \leq k}\left|\partial_{q}^{\alpha_{q}} \partial_{p}^{\alpha_{p}}\left[\left(Z^{\kappa_{t, T_{j: 1}}}(q, p)\right)^{-1}\right]\right| \leq C_{k}(t),
\end{equation}
$C_{k}(t)$ is dependent on $U_0$ and $U_1$.
\end{enumerate}
\label{lemma:Hamiltonianflow}
\end{lemma}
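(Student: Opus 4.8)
The plan is to exploit the piecewise structure of $\kappa_{t,T_{\nu:1}}$. Between consecutive hopping times the flow is governed by the classical Hamiltonian $\frac12|p|^2+U_0(q)$ or $\frac12|p|^2+U_1(q)$ (the additive constants $\mathcal{E}_{k_j}$ drop out of the force, as already noted in \cref{eq:Hanmiltonianflow}), so $\kappa_{t,T_{\nu:1}}$ is a finite concatenation of single-surface Hamiltonian flows. Writing $J(s)=J^{\kappa_{s,T_{\nu:1}}}$, the Jacobian solves the variational equation
\begin{equation}
\frac{\mathrm d}{\mathrm d s}J(s)=A(s)J(s),\quad J(0)=I_{2m},\quad A(s)=\mathbb{J}\,\nabla^2\widetilde U(s,Q(s)),
\end{equation}
where $\mathbb{J}=\begin{pmatrix}0 & I_m\\ -I_m & 0\end{pmatrix}$ and $\widetilde U$ is the (piecewise-in-$s$) active potential. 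The decisive structural point, and the only genuinely new ingredient relative to the single-surface estimates of \cite{lu2010frozen}, is that by \cref{asption} one has $\|\nabla^2\widetilde U(s,\cdot)\|\le mC_E$ for every $s$, regardless of which surface is active; hence $\|A(s)\|\le\max(1,mC_E)=:a_0$ uniformly in $s$, in $\nu$, and in the hopping configuration $T_{\nu:1}$. The hop times only prescribe where $A(\cdot)$ switches branch, never its magnitude.

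For the first claim, observe that $\mathbb{J}^{-1}A(s)=\nabla^2\widetilde U(s,Q(s))$ is symmetric, so $\frac{\mathrm d}{\mathrm d s}\bigl(J^T\mathbb{J}J\bigr)=J^T\bigl(A^T\mathbb{J}+\mathbb{J}A\bigr)J=0$. Since $J$ is continuous across hops (the underlying flow map is continuous and smooth in $(q,p)$) and $J(0)^T\mathbb{J}J(0)=\mathbb{J}$, we conclude $J(t)^T\mathbb{J}J(t)=\mathbb{J}$ for all $t$, i.e.\ $\kappa_{t,T_{\nu:1}}$ is canonical.

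For the second claim, Gr\"onwall's inequality applied to the variational equation gives $\|J(t)\|\le\exp\bigl(\int_0^t\|A(s)\|\,\mathrm d s\bigr)\le e^{a_0 t}$, a bound independent of $\nu$, $T_{\nu:1}$, $\varepsilon$, and $N$. Bounds on the phase-space derivatives $\partial_q^{\alpha_q}\partial_p^{\alpha_p}J$ then follow by induction on $k=|\alpha_q|+|\alpha_p|$: differentiating the variational equation produces a linear ODE for $\partial^\alpha J$ with the same homogeneous generator $A(s)$ and an inhomogeneity built from the higher derivatives $\nabla^{\ge 3}\widetilde U(s,Q(s))$ together with already-controlled lower-order derivatives of $J$ and of the flow. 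A further application of Gr\"onwall closes the induction; the constants $C_k(t)$ depend on $U_0,U_1$ through the sup-norms of their derivatives (consistent with the stated dependence on $U_0$ and $U_1$) but remain uniform in $\nu$ because each branch contributes the identical, uniformly-bounded forcing.

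For the third claim, a direct computation at $s=0$ gives $\partial_z Q(0)=I_m$ and $\partial_z P(0)=-iI_m$, hence $Z^{\kappa}(0)=\partial_z(Q+iP)\big|_{0}=2I_m$, which is invertible. Using the symplectic conservation law of the first claim one extracts a conserved Wronskian-type quantity that, together with $Z(0)=2I_m$, prevents $Z(t)$ from becoming singular and furnishes a positive lower bound on $|\det Z^{\kappa}(t)|$ depending only on $t$ and $a_0$ (as in \cite{lu2010frozen,lu2018frozen}); on the compact set $K$ this yields a uniform upper bound on $\|(Z^{\kappa})^{-1}\|$. Derivative bounds on $(Z^{\kappa})^{-1}$ follow from the resolvent identity $\partial^\alpha(Z^{-1})=-Z^{-1}(\partial^\alpha Z)Z^{-1}+(\text{lower order})$ via the Leibniz rule, every $\partial^\beta Z$ being controlled by the second claim. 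I expect the main obstacle to be exactly the uniformity in the number of hops $\nu$: a naive segment-by-segment composition accumulates a factor per hop and blows up as $\nu\to\infty$. This is resolved by treating the whole interval $[0,t]$ at once and observing that the switched generator $A(s)$ is bounded by the single constant $a_0$ for all $s$, so that Gr\"onwall over the fixed total time $t$, rather than over each segment separately, produces constants that see only $t$ and the shared bound $C_E$, never $\nu$.
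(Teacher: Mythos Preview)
The paper does not prove this lemma: it is stated as a quotation from \cite{lu2018frozen,lu2010frozen} and used as a black box. Your sketch is therefore not competing against an in-paper proof but rather supplying one, and the argument you outline (variational equation for $J$, symplecticity from $A^T\mathbb J+\mathbb J A=0$, Gr\"onwall for the $k=0$ bound, induction on $|\alpha|$ for higher derivatives, and the symplectic Wronskian identity for the invertibility and lower bound on $Z$) is precisely the standard route taken in those references. Your key observation---that the switched generator $A(s)$ obeys a bound depending only on the shared Hessian constant $C_E$, so that Gr\"onwall over the whole interval $[0,t]$ gives constants independent of $\nu$ and $T_{\nu:1}$---is exactly the point the paper needs and emphasizes immediately after the lemma.

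Two small corrections. First, the variational generator is $A(s)=\mathbb J\,\nabla^2_{(q,p)}h(Q(s),P(s))=\begin{pmatrix}0 & I_m\\ -\nabla^2\widetilde U(s,Q(s)) & 0\end{pmatrix}$, not $\mathbb J\,\nabla^2\widetilde U$; the $I_m$ block from $\partial_p^2 h$ is what produces the ``$\max(1,\cdot)$'' in your bound $a_0$. Second, your induction for $k\ge 1$ requires uniform bounds on $\nabla^{\ge 3}U_0,\nabla^{\ge 3}U_1$, which Assumption~1 as literally written (only $|\alpha|=2$) does not provide on all of $\mathbb R^{2m}$; in the cited works ``subquadratic'' is taken to mean $\sup_q|\partial^\alpha U|<\infty$ for every $|\alpha|\ge 2$, and you should read the assumption that way (this is consistent with item~3 being stated only over a compact $K$).
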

Since $\kappa$ doesn't depend on the value of $k_{[{\nu}/{2}]:1}$, therefore the value of $C_k(t)$ in \cref{lemma:Hamiltonianflow} doesn't depend on $N$ at all.

Based on this lemma we are ready to give a proof for \cref{lemma:growth}. 
\begin{proof}[Proof of \cref{lemma:growth}]
Recall that on $[t_{\nu},t_{\nu+1}]$, we have
$$
\frac{\mathrm{d}}{\mathrm{d} t} A_{k,k_{[{\nu}/{2}]:1}}^{(\nu)}= \frac{1}{2} A_{k,k_{[{\nu}/{2}]:1}}^{(\nu)} \operatorname{tr}\left(\left(Z^{(\nu)}\right)^{-1}\left(\partial_{z} P_{k_{[{\nu}/{2}]:1}}^{(\nu)}-i \partial_{z} Q_{k_{[{\nu}/{2}]:1}}^{(\nu)} \nabla_{Q}^{2} \widetilde U\right)\right), 
$$
where
  $$ \widetilde{U}(t,Q)=\left\{
    \begin{aligned}
    U_0(Q),\quad &t\in [t_{2j}, t_{2j+1}]
    \\
    U_1(Q)+\mathcal{E}_{k_j},\quad &t\in [t_{2j-1}, t_{2j}]
    \end{aligned}
    \right. .$$
Therefore on $[t_{\nu},t_{\nu+1}]$, we have
\begin{equation}
    \frac{\mathrm{d}}{\mathrm{d} t}\left| A_{k,k_{[{\nu}/{2}]:1}}^{(\nu)}\right|
    \leq  \frac{1}{2} \left|A_{k,k_{[{\nu}/{2}]:1}}^{(\nu)}\right| \left|\operatorname{tr}\left(\left(Z^{(\nu)}\right)^{-1}\left(\partial_{z} P_{k_{[{\nu}/{2}]:1}}^{(\nu)}-i \partial_{z} Q_{k_{[{\nu}/{2}]:1}}^{(\nu)} \nabla_{Q}^{2} \widetilde U\right)\right) \right|.
\end{equation}
According to \cref{asption} and \cref{lemma:Hamiltonianflow}, the right hand side could be uniformly bounded for $[0,t]$ by a constant $\gamma(t)$ which doesn't depend on $N$ and $\varepsilon$. Then according to Gronwall's inequality, let $G(t)=\exp(\gamma(t))$, we have
$$
\frac{\mathrm{d}}{\mathrm{d} t}\left| A_{k,k_{[{\nu}/{2}]:1}}^{(\nu)}\right|
    \leq  \frac{1}{2} \left|A_{k,k_{[{\nu}/{2}]:1}}^{(\nu)}\right| \gamma(t) \Rightarrow \frac{\left|A_{k,k_{[{\nu}/{2}]:1}}^{(\nu)}\left(t ,T_{n:1}, z_{0}\right) \right|}{\left|A_0^{(0)}(0,z_0)\right|}\leq G(t).
$$
\end{proof}
\subsection{Discussions}
With the above analysis, we have established that to reach a certain accuracy, the sample size required by the FGS algorithm depends  neither on the number of metal orbitals $N$  nor on the semiclassical parameter $\varepsilon$. This is the most important advantage of the FGS algorithm. We will also numerically validate this property in \cref{sec:numerical}.

The numerical analysis of the surface hopping ansatz is inspired by the numerical analysis of one-level Frozen Gaussian Sampling \cite{xie2021frozen}. However, in one-level system, there is no surface hopping, therefore the trajectory becomes deterministic if its initial value has been specified. As a result, the error control of the propagating scheme is much simpler.

Our result, however, indicates that the hopping between energy surfaces doesn't induce additional sampling errors. To the best of our knowledge, this is the first rigorous mathematical analysis that prove this particular feature of surface hopping algorithms, which gives an explanation to why surface hopping methods have been found to be efficient even in the semi-classical limit.

Furthermore, we have proved the above result in the context of surface hopping at molecule-metal surface. A similar analysis could be applied to surface hopping on two band systems (or any finite band systems) \cite{lu2018frozen,lu2016improved}, where the authors have already observed numerically that the computational cost is essentially independent of the semiclassical parameter $\varepsilon$.

\section{Numerical Results}
\label{sec:numerical}
In this section, we present numerical experiments to verify the accuracy and convergence properties of our method. We  also numerically explore the Anderson-Holstein model to show that our method can capture the information of the quantum observables that classical trajectories  fail to capture.

In the following experiments, we   solve equation system \ref{eq:discretize_u} with the following initial value:
\begin{equation}
    \left\{
\begin{aligned}
&u_{0}(0, x)=(\pi\varepsilon)^{-1/4}\exp \left(-\frac{(x-q_0)^{2}}{2 \varepsilon}\right) \exp \left(\frac{ \mathrm i p_0(x-q_0)}{\varepsilon}\right),\\ 
&u_k(0,x)=0,\quad k=1,2,\cdots,N.
\end{aligned}\right.
\label{eq:numericalinitial}
\end{equation}
The initial condition corresponds to a wave packet  localized at $q=q_0$ with momentum $p=p_0$ in the molecule orbital.
We  numerically solve all the time-evolution ODEs \cref{eq:groupdynamics} using the fourth order Runge-Kutta scheme with the time step $\Delta t = 0.01$. Without loss of generality, we take $\mathcal{E}_a=0, \mathcal{E}_b=1$ unless specified otherwise.

\subsection{Accuracy and convergence test}
The goals of this subsection are two-fold: on the one hand, we aim  to verify the accuracy of the FGS algorithm, using the numerical solution obtained by time-splitting spectral method \cite{bao2002time} as the reference solution. On the other hand, we  demonstrate the efficiency of the FGS algorithm, by showing that the sample size required to reach a certain error threshold is independent of both the number of metal orbitals $N$ and the semi-classical parameter $\varepsilon$. Besides, the half order convergence of the FGS algorithm with respect to the sample size is observed.

Let us first consider the following example where $U_0$ and $U_1$ are harmonic potentials. This is the scenario that are  widely used in the modeling of nonadiabatc dynamics at metal surfaces \cite{cao2017lindblad,dou2015surface,dou2018perspective}.

\begin{figure}[htbp]
    \centering
    \includegraphics[width=1\textwidth]{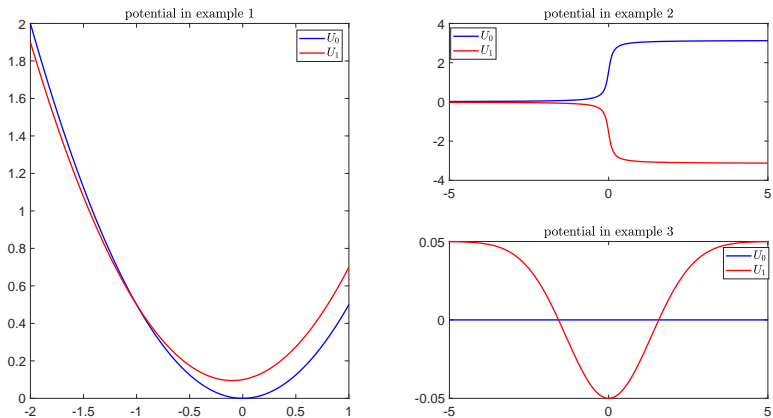}
    \caption{Plots of the potential functions $U_0$ and $U_1$ in Example 1, 2 and 3.}
    \label{fig:potential}
\end{figure}

\textbf{Example 1 (harmonic oscillators)}. Let 
$$
\begin{aligned}
& U_0(x)=\frac{x^2}{2},\quad   U_1(x) =\frac{x^2}{2}+0.1x+0.1,\\
&\quad q_0=-1.5, \quad p_0=2, \quad  V(\mathcal{E},x)=1.
\end{aligned}
$$

Let us first demonstrate the accuracy of the FGS algorithm. In  \cref{fig:example1_wavefunction}, with $M=20000$ sampled wave packets, we present the real part of the reconstructed wave functions $u_0(t,x)$ at the molecule orbital, and $u_1(t,x)$ at one of the metal orbitals, with $\varepsilon=\frac{1}{64}$, $N=16$ and $t=1$. We can see that the reconstructed wavefunction has a very high precision.

\begin{figure}[htbp]
    \centering
    \includegraphics[width=1\textwidth]{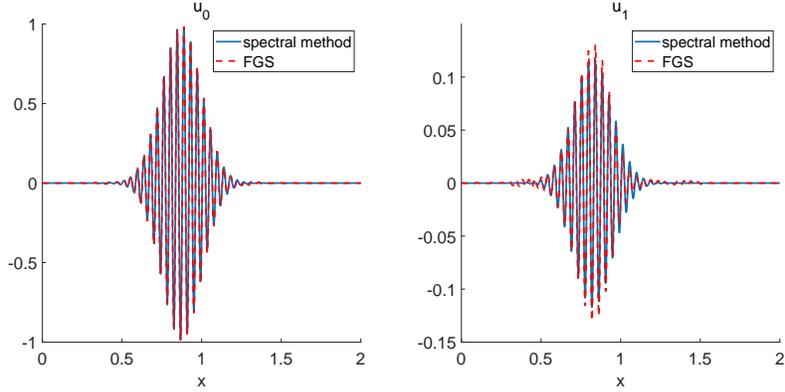}
    \caption{  (\textbf{Example 1}): The real part of wave functions $u_0(t,x)$ (left) and $u_1(t,x)$ (right) for $t=1$ with  $N=16$ and $\varepsilon=\frac{1}{64}$. The red line corresponds to FGS (sample size $M=20000$) and the blue line corresponds to the time-splitting spectral method.}
    \label{fig:example1_wavefunction}
\end{figure}

Now let us further investigate the sampling error of the FGS algorithm and numerically validate that it does not depend on $\varepsilon$ and $N$. 
We compare the $L^2$ sampling errors of FGS algorithm under two sets of scenarios: (1) fix a large enough number of orbitals $N=256$, let $\varepsilon = 1/32,1/64,1/128$; (2) fix a small enough semiclassical parameter $\varepsilon=\frac{1}{64}$, let $N=16,64,256$. The sampling errors of these two comparisons are listed in \cref{table:example1}. We can see that to reach the same precision, the number of wave packets that we need in the FGS algorithm is independent of both $\varepsilon$ and $N$.
Furthermore, a log-scale plot of the sampling error is available in \cref{fig:example1_order}, which verifies that the convergence order of our stochastic method is indeed $\frac{1}{2}$.

\begin{center}
  \begin{tabular}{c|ccc}
   \toprule
   \textbf{Example 1 ($N=256$)} & \(\varepsilon=1/32\) & \(\varepsilon=1/64\) & \(\varepsilon=1/128\)  \\
   \midrule
   \(M=3200\) & 1.64e-01 & 1.52e-01 & 1.55e-01  \\
   \(M=6400\) & 1.11e-01 & 1.15e-01 & 1.11e-01  \\
   \(M=12800\) & 8.00e-02 & 7.75e-02 & 7.59e-02  \\
   \(M=25600\) & 5.55e-02 & 5.52e-02 & 5.38e-02  \\
   \(M=51200\) & 4.08e-02 & 4.05e-02 & 4.00e-02   \\
   \midrule
   \textbf{Example 1 ($\varepsilon=1/64$)} & \(N=16\) & \(N=64\) & \(N=256\)  \\
   \midrule
   \(M=3200\) & 1.53e-01 & 1.59e-01 & 1.52e-01  \\
   \(M=6400\) & 1.10e-01 & 1.10e-01 & 1.15e-01  \\
   \(M=12800\) & 8.39e-02 & 8.11e-02 & 7.75e-02  \\
   \(M=25600\) & 5.39e-02 & 5.54e-02 & 5.52e-02  \\
   \(M=51200\) & 4.01e-02 & 4.00e-02 & 4.05e-02   \\
   \bottomrule
  \end{tabular}
 \captionof{table}{(\textbf{Example 1}) The $L^2$ errors of wave functions (at $t=1.5$) for  $\varepsilon=\frac{1}{32}, \frac{1}{64}, \frac{1}{128}$ versus various sample size $M$ with fixed $N=256$ (top) and for $N=16, 64, 256$ versus various sample size $M$ with fixed $\varepsilon=\frac{1}{64}$ (bottom). }
 \label{table:example1}
 \end{center}

\begin{figure}[htbp]
    \centering
    \includegraphics[width=1\textwidth]{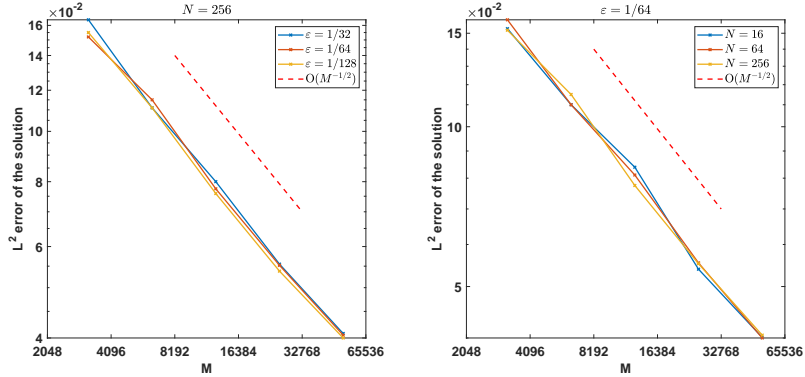}
    \caption{(\textbf{Example 1}) The $L^2$ error versus the sample size $M$, for fixed $N=256$ with various $\epsilon=\frac{1}{32},\frac{1}{64},\frac{1}{128}$ (left) and for  fixed $\varepsilon=\frac{1}{64}$ with various $N=16,64,256$ (right). 
    }
    \label{fig:example1_order} 
\end{figure}

Next, we move on to more complicated examples. Let us consider the extended coupling with reflection  (example 2) and the inhomogeneous transition (example 3). The extended coupling with reflection has been one of the most challenging test cases for surface hopping algorithm  since Tully's pioneering work \cite{tully1990molecular}, while the inhomogeneous transition is widely used to incorporate generic coupling between molecule and metal orbitals \cite{dou2018perspective,dou2015surface}.

\textbf{Example 2 (extended coupling with reflection)} Let
\begin{gather*}
 U_0(x)=\arctan(10x)+\frac{\pi}{2},\quad  U_1(x)=-U_0(x),\\
q_0=-1.5, \quad p_0=2, \quad V(\mathcal{E},x)=1.
\end{gather*}

\textbf{Example 3 (inhomogeneous
 transition)} Let
\begin{gather*}
U_{0}(x)=0, \quad U_{1}(x)=-0.1 \exp \left(-0.28 x^{2}\right)+0.05,\\
 q_0=-2.5,\quad p_0=2 ,\quad V(\mathcal E,x)=\exp \left(-0.06 x^{2}\right).
\end{gather*}

\begin{figure}[htbp]
    \centering
    \includegraphics[width=\textwidth]{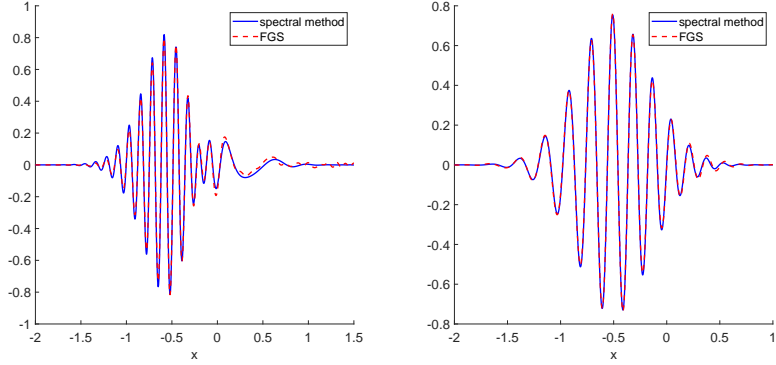}
    \caption{Left (Example 2): The real part of  $u_0(t,x)$ at $t=1.4$ with $\varepsilon=0.04, N=5, M=16000$. Right ({Example 3}): The real part of $u_0(t,x)$ at $t=1$ with $\varepsilon=1/16, N=5, M=10000$.}
    \label{fig:exampleu0}
\end{figure}

The real part of the reconstructed wave function $u_0(t,x)$ of these two examples is presented in \cref{fig:exampleu0}. In example 2, the wave packet is initialized with the center placed at $q_0=-1.5$ and  is set to travel to the right, and gets reflected back to the left. In example 3, the jumping  intensity peaks around the origin and decays rapidly away from the center. These features make the computation much more difficult than that in example 1. Nonetheless, we can still see that the reconstructed wave functions in \cref{fig:exampleu0} have  pretty high precision.

Let us  examine the convergence property  of FGS algorithm once again using example 2, similar as what we did in example 1. The comparison of sampling errors is listed in \cref{table:example2}. We can still see that the sampling error is independent of both $N$ and $\varepsilon$. A log-scale plot of the sampling error \cref{fig:example2_order} demonstrates its half-order convergence with respect to the number of sampled wavepackets $M$.

\begin{center}
  \begin{tabular}{c|ccc}
   \toprule
   \textbf{Example 2 ($N=256$)} & \(\varepsilon=1/32\) & \(\varepsilon=1/64\) & \(\varepsilon=1/128\)  \\
   \midrule
   \(M=3200\) & 2.70e00 & 7.50e-01 & 9.82e-01  \\
   \(M=6400\) & 2.20e-01 & 1.96e-01 & 1.87e-01  \\
   \(M=12800\) & 1.38e-01 & 1.35e-01 & 1.27e-01  \\
   \(M=25600\) & 1.00e-01 & 9.79e-02 & 8.92e-02  \\
   \(M=51200\) & 7.21e-02 & 6.81e-02 & 6.41e-02   \\
   \midrule
   \textbf{Example 2 ($\varepsilon=1/64$)} & \(N=16\) & \(N=64\) & \(N=256\)  \\
   \midrule
   \(M=3200\) & 2.66e-01 & 2.68e-01 & 7.50e-01  \\
   \(M=6400\) & 1.92e-01 & 1.93e-01 & 1.96e-01  \\
   \(M=12800\) & 1.39e-01 & 1.37e-01 & 1.34e-01  \\
   \(M=25600\) & 9.46e-02 & 9.47e-02 & 9.79e-02  \\
   \(M=51200\) & 6.73e-02 & 6.72e-02 & 6.81e-02   \\
   \bottomrule
  \end{tabular}
 \captionof{table}{\textbf{(Example 2) }The  $L^2$ error (at $t=1.5$)  for various $\varepsilon=\frac{1}{32},\frac{1}{64},\frac{1}{128}$ versus various sample size $M$ with fixed $N=256$ (top) and for various $N=16,64,256$ versus various sample size $M$ with fixed $\varepsilon=\frac{1}{64}$ (bottom) .}
 \label{table:example2}
 \end{center}

\begin{figure}[htbp]
    \centering
    \includegraphics[width=1\textwidth]{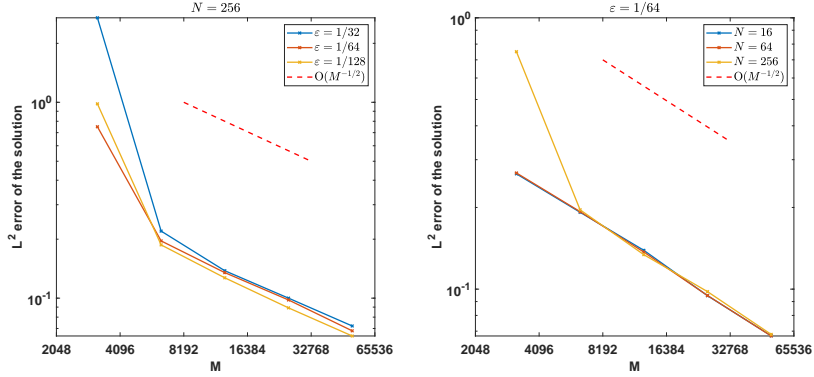}
    \caption{(\textbf{Example 2}) The $L^2$ error  versus the sample size $M$, for fixed $N=256$ with various $\epsilon=\frac{1}{32},\frac{1}{64},\frac{1}{128}$ (left) and for  fixed $\varepsilon=\frac{1}{64}$ with various $N=16,64,256$ (right). 
    }
    \label{fig:example2_order}
\end{figure}

\subsection{Numerical explorations}
With the FGS algorithm, we are now ready to explore further on the nonadiabatic dynamics at metal surfaces. On the one hand, we would like to emphasize that
our method is  superior to the classical trajectories surface hopping methods by showing that the quantum observables of the nonadiabatic dynamics can not be fully captured by the ensemble average of classical trajectories. On the other hand, we  use FGS algorithm to explore the physics of metal surfaces, such as the transition rates with different energy gaps and the finite temperature effect.

\subsubsection{Quantum observables v.s. classical ensemble average}
Let $\hat O \in L^2(\mathbb{R}^m,\mathbb{C}^{N+1})$ be the quantum observable associated with the classical quantity $O(q,p)$. Based on the FGS algorithm, the expectation value of the quantum observable is
\begin{equation}
    \langle \hat O\rangle_{\text{q}}(t;\varepsilon)=\int \overline{u}_0(t,x)\hat O u_0(t,x)+\frac{1}{N}\left(\sum_{k=1}^N\overline{u}_k(t,x)\hat O u_k(t,x)\right)\mathrm{d}x,
    \label{eq:quantumobservable}
\end{equation}
However, using the classical trajectories sampled from $\widetilde z_t$ \cref{eq:Hanmiltonianflow}, we can also calculate the classical ensemble average of $O(q,p)$:
\begin{equation}
    \langle  O\rangle_{\text{c}}(t)=\int  O(q,p)\pi_t(q,p) \mathrm{d}q\mathrm{d}p,
    \label{eq:classicalobservable}
\end{equation}
where $\pi_t(q,p)$ is the distribution at time $t$ determined by the initial condition $\pi(q,p)$ and the stochastic evolution \cref{eq:Hanmiltonianflow}. In practice, we have
\begin{equation}
    \langle  O\rangle_{\text{c}}(t)\approx \frac{ 1}{M}\sum_{i=1}^M O(q_i,p_i),
\end{equation}

Note that in the FGS algorithm, the classical trajectory does not depend on $\varepsilon$. In other words, the classical ensemble average  is independent of $\varepsilon$, while the expectation valquantum observable definitely depends on $\varepsilon$. One might expect that when $\varepsilon\rightarrow 0$, $\langle \hat O\rangle_{\text{q}}(t;\varepsilon)$ would converge to $\langle  O\rangle_{\text{c}}(t)$, which is the  motivation of many surface hopping methods. 
However, we demonstrate below that this is not true.

Let us consider $\hat O$ to be the nuclear position $\hat x$. Following \cref{eq:quantumobservable} and \cref{eq:classicalobservable}, we can calculate the quantum expectation $\langle\hat x\rangle_{\text{q}}(t;\varepsilon)$ ($\varepsilon$ is taken to be $1/2$, $1/4$, $1/8$, $1/16$, $1/32$ and $1/64$) and the classical average $\langle x\rangle_{\text{c}}(t)$ in example 1. The result is shown in \cref{fig:quantum and classical center}.

\begin{figure}[htbp]
    \centering
    \includegraphics[width=1\textwidth]{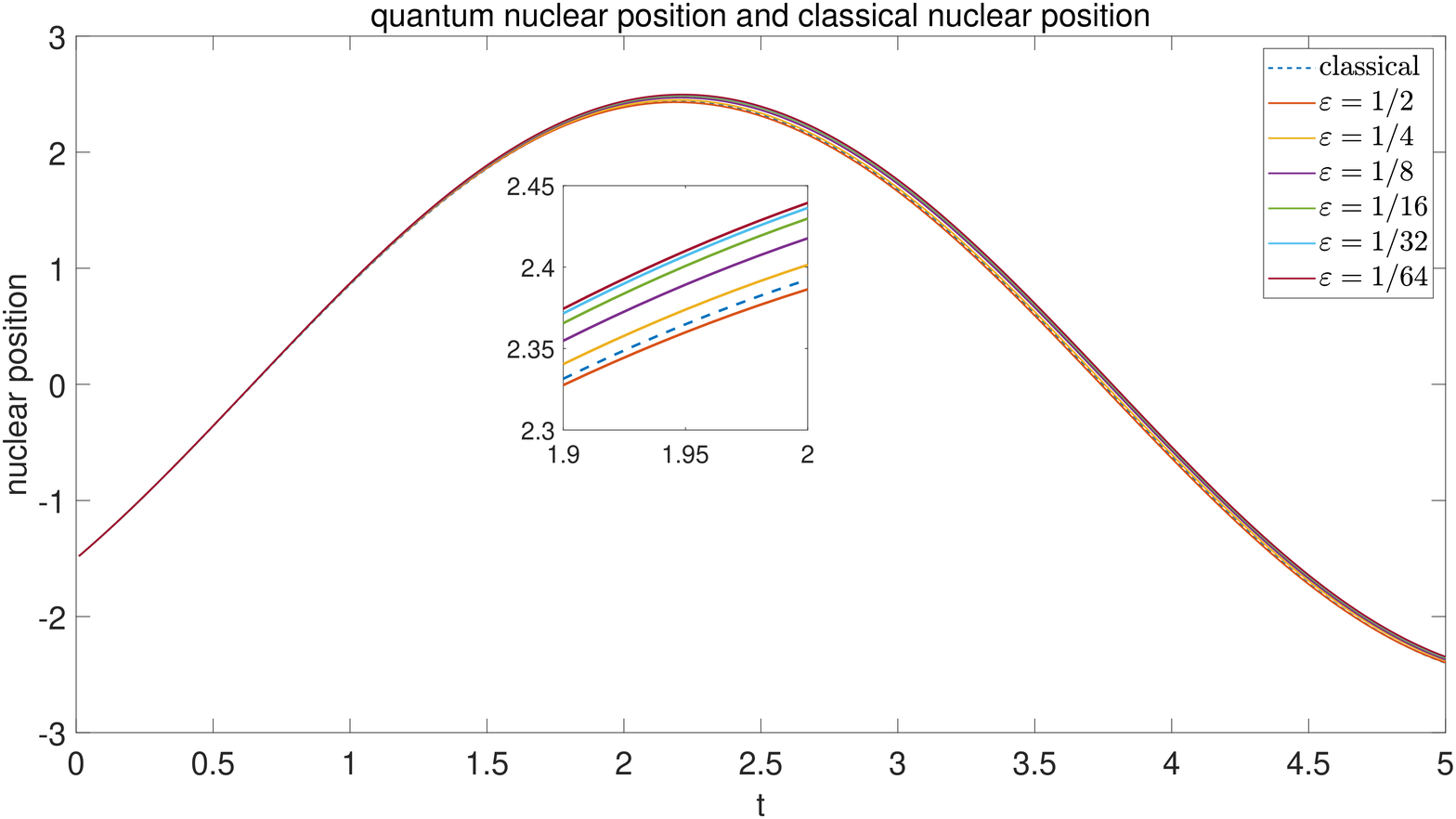}
    \caption{The comparison of classical average $\langle x\rangle_{\text{c}}(t)$ (dash line) and quantum expectation $\langle\hat x\rangle_{\text{q}}(t;\varepsilon)$ (solid line) ($\varepsilon$ = $1/2$, $1/4$, $1/8$, $1/16$, $1/32$ and $1/64$) of the nuclear position in example 1. One can see that as $\varepsilon\rightarrow 0$, $\langle\hat x\rangle_{\text{q}}(t;\varepsilon)$ does not converge to $\langle x\rangle_{\text{c}}(t)$.}
    \label{fig:quantum and classical center}
\end{figure}

One can clearly see that as $\varepsilon\rightarrow 0$, $\langle\hat x\rangle_{\text{q}}(t;\varepsilon)$ does not converge to $\langle x\rangle_{\text{c}}(t)$. This means that the classical trajectories can not capture all the physics of the system and one must rely on a quantum-level method, such as FGS algorithm to capture this information.
 Formally speaking, the FGS method tracks the phase information of each wave packet, and the superposition of those wave packets yields a subtle cancellation which the ensemble of classical trajectories can not provide.

\subsubsection{Electron transfer and finite temperature effect} 
Now, let us numerically explore the physics of Anderson-Holstein model with more specified setting.
Let us first simulate the electron transfer at metal surfaces using example $3$. The transition rate at $t=0.5$, calculated by both FGS and the time-splitting scheme,  are shown in \cref{fig:transition_rate}, with various $\mathcal{E}_a=0,\frac{1}{32},\frac{1}{16},\frac{1}{8},\frac{1}{4}$ but fixing $\mathcal{E}_b-\mathcal{E}_a=1$. We remark that $\mathcal{E}_a$ can be interpreted as the energy gap between the molecule orbital and the metal orbitals.  As $\mathcal{E}_a$ decreases, the energy gap gets smaller, and then it's easier for electrons to transfer between energy surfaces, and therefore the system has a higher hopping rate. This is indeed reflected in \cref{fig:transition_rate}. 
\begin{figure}[htbp]
    \centering
    \includegraphics[width=1\textwidth]{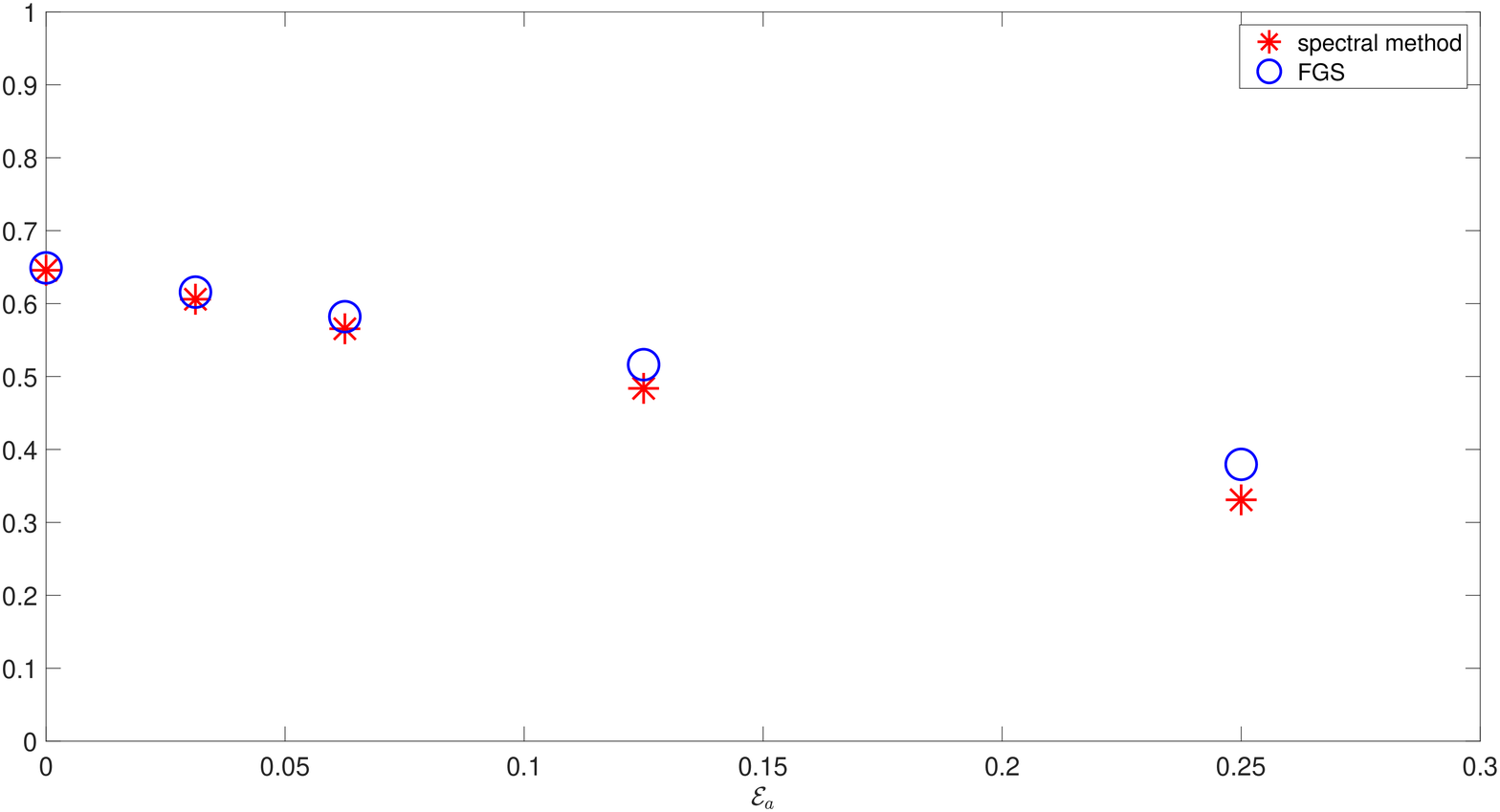}
    \caption{Transition rates at $t=0.5$ in example $3$ with $\varepsilon=\frac{1}{16}, N=16, V(\mathcal{E},x)=5$, sample size $M=120000$ and various $\mathcal{E}_a=0,\frac{1}{32},\frac{1}{16},\frac{1}{8},\frac{1}{4}$, the blue circle corresponds to FGS and the red asterisk corresponds to the time-splitting spectral method.}
    \label{fig:transition_rate}
\end{figure}

However, according to \cref{eq:Hanmiltonianflow}, we can see that  different metal orbital energies do not affect  the classical trajectories,  but only contribute to the quantum phases in the wave function. 
In other words, the change of transition rate with different energy band gaps are fundamentally rooted in
the cancellation of the phase of the wave packet. This is also a truly quantum effect that can not be captured by classical trajectories.

We are also able to efficiently simulate the finite temperature effect at metal surfaces \cite{dou2018perspective}, by letting the coupling potential $V(\mathcal E,x)$  be a Fermi-Dirac distribution:

\textbf{Example 4.}(Finite temperature effect) 
$$
\begin{aligned}
&U_{0}(x)=0, \quad U_{1}(x)=-0.1 \exp \left(-0.28 x^{2}\right)+0.05,\\
& \quad p_0=2,\quad q_0=-2.5, \quad  V(\mathcal{E},x)=\frac{8}{1+\exp \left(\beta\mathcal{E}\right)}.
\end{aligned}
$$
Here $\beta$ is the inverse temperature. For different values of $\beta$  ($\beta = 1,16$), we plot the real part of the reconstructed wave functions $u_0(t,x)$ of the molecule orbital at $t=0.5$ (with $\varepsilon=1/16$, $N=16$) in \cref{fig:Fermi_Dirac.u0}.
We can see that the FGS algorithm can efficiently simulate both low and high temperature scenarios at metal surfaces.

\begin{figure}[htbp]
    \centering
    \includegraphics[width=0.95\textwidth]{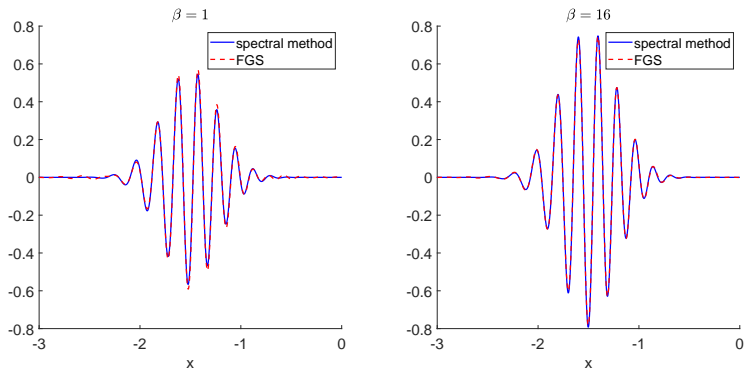}
    \caption{The real part of wave function $u_0(t,x)$ for $t=0.5$ in example $4$ with sample number $M=50000$, $\varepsilon=\frac{1}{16},N=16,\beta=1$ (left) and $\beta=16$ (right), the red line corresponds to FGA and the blue line corresponds to time-splitting spectral method.}
    \label{fig:Fermi_Dirac.u0}
\end{figure}
From a physical point of view, a higher temperature (i.e. a smaller $\beta$)  induces more  electron transfer. This is indeed observed in \cref{fig:Fermi_Dirac.u0}:
 a smaller $\beta$ corresponds to a  larger amplitude of the wave function at the molecule orbital, which indicates that there is less electron transfer.

\section{Conclusions and Discussions}
In this work, we have developed a Frozen Gaussian Sampling (FGS) method for efficiently simulating nonadiabatic quantum dynamics at metal surfaces. The key advantage is that its computational cost is essentially independent of both the semi-classical parameter $\varepsilon$ and the number of metal orbitals $N$. To our best knowledge, we have provided  the first rigorous mathematical proof that surface hopping strategies do not induce additional sampling errors.

The FGS algorithm provides various new possibilities for studying nonadiabatic dynamics at metal surfaces. 
For example, we are modeling the metal surfaces as a closed quantum system in this work. However, the metal surfaces are often  viewed as an open quantum system coupled with multiple baths subject to the Boltzmann distribution \cite{dou2015surface,dou2018perspective}.
The strategy of the FGS could, in theory, be generalized towards the open system setting.
What's more, nonadiabatic dynamics in the strong-coupling regime is  very challenging to simulate. This has been discussed in \cite{cai2022asymptotic} for the two-level spin-boson model, but it is still unclear whether there exists a generic approach for all strong-coupling models. These are left for future research.

\appendix

\section{Derivation of the method}
\label{sec:derivation}
Recall that
$$
\left\{
\begin{aligned}
\mathrm{i}\varepsilon \partial_{t} u_{0}(t, x)&=-\frac{\varepsilon^{2}}{2} \Delta u_{0}(t, x)+U_{0}(x) u_{0}(t, x)+\varepsilon \sum_{k=1}^Nh V\left({\mathcal{E}_k}, x  \right)u_{k}(t, x),\\
\mathrm{i}\varepsilon \partial_{t} u_{k}(t, x)&=-\frac{\varepsilon^{2}}{2} \Delta u_{k}(t, x)+\left(U_{1}(x)+{\mathcal{E}}_k\right) u_{k}(t, x)+\varepsilon \overline{V}({\mathcal{E}_k}, x) u_{0}(t, x),\text{ } k=1:N.
\end{aligned}\right.
$$
and our ansatz
$$
\begin{aligned}
{u}_{0,\text{FG}}(t,x)&=  u_0^{(0)}(t,x)+ \sum_{k_1=1}^N u_{0,k_1}^{(2)}(t,x)+\cdots,\\
{u}_{k,\text{FG}}(t,x)&= \sum_{k=1}^N u_{k}^{(1)}(t,x)+ \sum_{k_1=1}^N u_{k,k_1}^{(3)}(t,x)+\cdots,\quad k=1,\cdots,N.
\end{aligned}
$$
The ansatz for $u_{0}^{(0)}(t,x)$ is
\begin{equation}
 u_0^{(0)}(t, x)=\frac{1}{(2 \pi \varepsilon)^{3 m / 2}} \int A_0^{(0)}\left(t, z_{0}\right) \exp \left(\frac{\mathrm i}{\varepsilon} \Theta_0^{(0)}\left(t, z_{0}, x\right)\right) \mathrm{d} z_{0}.
\end{equation}
The ansatz for $u_{k}^{(1)}(t,x)$ is
\begin{equation}
\begin{aligned}
 u_{k}^{(1)}(t, x)=& \frac{1}{(2 \pi \varepsilon)^{3 m / 2}}  \int \mathrm{d} z_{0} \int_{0}^{t} \mathrm{d} t_{1}\\
   &\tau_{k}^{(1)}\left(t_1, z_{0}\right)\times 
 A_{k}^{(1)}\left(t,t_1, z_{0}\right) \exp \left(\frac{\mathrm i}{\varepsilon} \Theta_{k}^{(1)}\left(t, t_1, z_{0}, x\right)
\right).
\end{aligned}
\end{equation}
We have 
\begin{equation}
\begin{aligned}
&\mathrm{i}\varepsilon \partial_{t} u_{k}^{(1)}(t, x)+\frac{\varepsilon^{2}}{2} \Delta u_{k}^{(1)}(t, x)-\left(U_{1}(x)+{\mathcal{E}}_k\right) u_{k}^{(1)}(t, x)\\
=& \frac{\mathrm{i}\varepsilon}{(2 \pi \varepsilon)^{3 m / 2}}  \int \mathrm{d} z_{0} \tau_{k}^{(1)}\left(t, z_{0}\right)\times 
 A_{k}^{(1)}\left(t,t, z_{0}\right) \exp \left(\frac{\mathrm i}{\varepsilon} \Theta_{k}^{(1)}\left(t, t, z_{0}, x\right)
\right).
\end{aligned}
\end{equation}
With comparison to the equation for $u_k$, we actually want to match the following term:
\begin{equation}
\begin{aligned}
&\mathrm{i}\varepsilon \partial_{t} u_{k}^{(1)}(t, x)+\frac{\varepsilon^{2}}{2} \Delta u_{k}^{(1)}(t, x)-\left(U_{1}(x)+{\mathcal{E}}_k\right) u_{k}^{(1)}(t, x)=
\varepsilon \overline{V}({\mathcal{E}_k}, x) u_{0}^{(0)}(t, x),
\end{aligned}
\end{equation}
i.e.
\begin{equation}
\begin{aligned}
&\mathrm{i}\int \mathrm{d} z_{0} \tau_{k}^{(1)}\left(t, z_{0}\right)\times 
 A_{k}^{(1)}\left(t,t, z_{0}\right) \exp \left(\frac{\mathrm i}{\varepsilon} \Theta_{k}^{(1)}\left(t, t, z_{0}, x\right)
\right) \\
= &\overline{V}({\mathcal{E}_k}, x)\int A_0^{(0)}\left(t, z_{0}\right) \exp \left(\frac{\mathrm i}{\varepsilon} \Theta_0^{(0)}\left(t, z_{0}, x\right)\right) \mathrm{d} z_{0}.
\end{aligned}
\end{equation}
Therefore, we let
$$
A_{k}^{(1)}\left(t,t, z_{0}\right) = A_0^{(0)}\left(t, z_{0}\right),\quad \Theta_{k}^{(1)}\left(t,t, z_{0}\right) = \Theta_0^{(0)}\left(t, z_{0}\right),
$$
and we let
$$
\tau_{k}^{(1)}\left(t, z_{0}\right) =-\mathrm{i} \overline{V}({\mathcal{E}_k}, Q_0^{(0)}),
$$
where we use $\overline{V}({\mathcal{E}_k}, Q_0^{(0)})$ to approximate $\overline{V}({\mathcal{E}_k}, x)$. This approximation could be justified using Taylor's expansion, and it is rigorously proved in \cite{lu2018frozen} that it only introduces an $O(\varepsilon)$ error.

Now let's turn to the ansatz for
\begin{equation}
\begin{aligned}
 u_{0,k_{1}}^{(2)}&(t, x)=  \frac{1}{(2 \pi \varepsilon)^{3 m / 2}} \int \mathrm{d} z_{0} \int_{0}^{t} \mathrm{~d} t_{2} \int_{0}^{t_{2}} \mathrm{~d} t_{1}\\
  &h\tau_{k_1}^{(1)}\left(t_1, z_{0}\right)\tau_{{k_1}}^{(2)}\left(t_2,t_1, z_{0}\right)
A_{0,k_{1}}^{(2)}\left(t, t_2,t_1, z_{0}\right) \exp \left(\frac{\mathrm i}{\varepsilon} \Theta_{0,k_{1}}^{(2)}\left(t, t_2,t_1, z_{0}, x\right).
\right) 
\end{aligned}
\end{equation}
Plugging in this ansatz into the equation for $u_0$, we identify the term needs to be matched:
\begin{equation}
\begin{aligned}
&  \mathrm{i}\varepsilon\int \mathrm{d} z_{0}  \int_{0}^{t} \mathrm{~d} t_{1}h\tau_{k_1}^{(1)}\left(t_1, z_{0}\right)\tau_{{k_1}}^{(2)}\left(t,t_1, z_{0}\right)
A_{0,k_{1}}^{(2)}\left(t, t,t_1, z_{0}\right) \exp \left(\frac{\mathrm i}{\varepsilon} \Theta_{0,k_{1}}^{(2)}\left(t, t,t_1, z_{0}, x\right)
\right) \\
 &=\varepsilon h V\left({\mathcal{E}_{k_1}}, x  \right)\int \mathrm{d} z_{0} \int_{0}^{t} \mathrm{d} t_{1}\tau_{{k_1}}^{(1)}\left(t_1, z_{0}\right)\times 
 A_{k_1}^{(1)}\left(t,t_1, z_{0}\right) \exp \left(\frac{\mathrm i}{\varepsilon} \Theta_{k_1}^{(1)}\left(t, t_1, z_{0}, x\right)\right).
\end{aligned}
\end{equation}
Therefore, we let
$$
A_{0,k_{1}}^{(2)}\left(t, t,t_1, z_{0}\right) = A_{k_1}^{(1)}\left(t,t_1, z_{0}\right),\quad \Theta_{0,k_{1}}^{(2)}\left(t, t,t_1, z_{0}\right) = \Theta_{k_1}^{(1)}\left(t,t_1, z_{0}\right),
$$
and we let
$$
\tau_{k_1}^{(2)}\left(t, z_{0}\right) =-\mathrm{i} {V}({\mathcal{E}_k}, Q_{k_1}^{(1)}).
$$
where we use ${V}({\mathcal{E}_k}, Q_{k_1}^{(1)})$ to approximate ${V}({\mathcal{E}_k}, x)$. This approximation again only introduces an $O(\varepsilon)$ error.

Carrying on this process, we arrive at the dynamics \cref{eq:groupdynamics}, the initial value \cref{eq:initialQPS,eq:dynamicsforamplitude2,eq:initialQPSA2}, and the hopping coefficient \cref{eq:hoppingcoefficient}. Its mathematical justification could be done by the exact same semiclassical analysis in \cite{lu2018frozen}, of which we omit the details here.

\section*{Acknowledgement}
The work of Z.Z. was  supported by the National Key R\&D Program of China, Project Number 2020YFA0712000, 2021YFA1001200 and NSFC grant Number 12031013, 12171013. The work of Z.H. was partially supported  by the National Science Foundation under Grant No. DMS-1652330. Z.Z. thanks Dr. Wenjie Dou for helpful discussions.
 L.X. thanks Dr. Hao Wu for his support and encouragement.

\bibliographystyle{elsarticle-num}
\bibliography{SH_metal}

\end{document}